\documentclass[aps,pra,twocolumn,nofootinbib]{revtex4-2}

\usepackage[T1]{fontenc}
\usepackage{lmodern}
\usepackage{amsmath,amssymb,mathtools}
\usepackage{amsthm}
\usepackage{microtype}
\usepackage{xcolor}
\usepackage{tikz}
\usepackage{xurl}
\usepackage[colorlinks=true,allcolors=blue,hypertexnames=false]{hyperref}

\usetikzlibrary{
  arrows.meta,
  calc,
  fit,
  shapes.geometric
}

\definecolor{vthreeexact}{HTML}{006D9C}
\definecolor{vthreerelaxed}{HTML}{D89000}
\definecolor{vthreepath}{HTML}{7546A6}
\definecolor{vthreegreen}{HTML}{16856B}

\tikzset{
  v3 every figure/.style={
    font=\footnotesize,
    >=Latex,
    line cap=round,
    line join=round
  },
  v3 panel/.style={
    font=\bfseries\small,
    anchor=north west,
    inner sep=0pt
  },
  v3 annotation/.style={
    align=center,
    text=black!80,
    font=\scriptsize
  }
}
 
\newcommand{\Tr}{\operatorname{Tr}}
\newcommand{\1}{\openone}
\newcommand{\Istar}{I_{3322}^{\,*}}
\newcommand{\betaPV}{\beta_{\mathrm{PV}}}

\theoremstyle{plain}
\newtheorem{theorem}{Theorem}
\newtheorem{lemma}{Lemma}
\newtheorem{proposition}{Proposition}
\newtheorem{corollary}{Corollary}
\theoremstyle{definition}

\theoremstyle{remark}
\newtheorem{remark}{Remark}

\begin{document}

\title{The quantum supremum of the \texorpdfstring{$I_{3322}$}{I3322}
Bell inequality is not attained in finite dimension}

\author{Jef Pauwels}
\affiliation{Department of Applied Physics, University of Geneva, Switzerland}
\affiliation{Constructor University, Bremen, Germany}

\begin{abstract}
In 2010, P\'al and V\'ertesi found a family of finite-dimensional
strategies for the $I_{3322}$ Bell inequality whose optimized values
appeared to converge as the local Hilbert-space dimension grew.  They
conjectured that this limit is the supremum over all finite-dimensional
quantum strategies, but that no finite-dimensional strategy attains it.  We
prove both claims.  The proof uses the symmetry of the Bell functional to associate every strategy with a finite matrix
of probabilities, one for each pair of spectral subspaces of Alice and Bob.
This matrix gives an upper bound on the Bell value, and finite-dimensional
strategies built from the repeating structure found by P\'al and V\'ertesi
approach it as the dimension grows.  If the bound were attained exactly in finite dimension,
the optimality conditions would then require a state that cannot be normalized.
Consequently, the set of finite-dimensional quantum correlations
is not closed in the $(3,3,2,2)$ scenario, the smallest Bell scenario where this can happen.  Moreover,
approaching the supremum requires unbounded local dimension.  The core of the proof was formalized in Lean~4.
\end{abstract}

\maketitle

\section{Introduction}
\label{sec:introduction}

Does every Bell inequality \cite{Bell,ReviewBell} have an optimal
quantum strategy?  For a fixed local Hilbert-space dimension the answer
is yes, because the strategy set is compact and the maximum is attained.  But
quantum theory imposes no bound on the dimension: optimizing over all
finite dimensions means optimizing over an increasing union of compact
correlation sets, and that union need not be closed.  Strategies of increasing
dimension may then approach a value that no finite-dimensional
strategy reaches.

Slofstra proved that the set $\mathcal C_q$ of finite-dimensional
tensor-product correlations is not closed
\cite{Slofstra}.  A later construction exhibited nonclosure with five binary
measurements per party~\cite{DPP}, subsequently reduced to four~\cite{Beigi}.
Relatedly, Coladangelo and Stark explicitly constructed a correlation that intrinsically
requires infinite-dimensional entanglement \cite{ColadangeloStark}.
The later $\mathrm{MIP}^*=\mathrm{RE}$ result separated the closure
of finite-dimensional correlations from the commuting-operator set,
resolving Tsirelson's problem
\cite{SlofstraTsirelson,ScholzWerner,Fritz,JNPPSW} in the negative
\cite{MIPstarRE}.  However, these results use specially
constructed correlations or nonlocal games; they do not locate
nonclosure at a familiar Bell inequality in a small measurement
scenario.

Beigi observed that, if either party has only two binary measurements,
Jordan's block decomposition reduces every finite-dimensional
correlation to a convex combination of qubit--qubit correlations, and
the resulting set is closed \cite{Beigi}.
Thus $(3,3,2,2)$ (three settings per party, two outcomes each) is
the smallest binary-outcome scenario in which nonclosure is possible.

The natural inequality at this boundary is $I_{3322}$. It is the only facet class of the local
polytope in this scenario beyond positivity constraints and
relabelings or liftings \cite{PironioLifting} of CHSH
\cite{CHSH,Froissart,Sliwa2003,PitowskySvozil,CollinsGisin}.  It
detects some mixed two-qubit states that satisfy every CHSH inequality
\cite{CollinsGisin}, and its violation is sensitive to dimension:
earlier numerical optimization gave a two-qubit maximum of $1/4$
\cite{CollinsGisin,PalVertesi2008,GigenaKaniewski}, and
Appendix~\ref{app:qubit-bound} proves this value exactly.  P\'al and
V\'ertesi found higher-dimensional, nonmaximally entangled states with
values above it \cite{PV}; subsequent work further studied its
dimension dependence
\cite{MoroderEtAl,NavascuesDeLaTorreVertesi,VidickWehner}.
The inequality was implemented experimentally in 2005
\cite{AltepeterEtAl} and has since been tested with later photon
sources \cite{PomaricoEtAl}, studied in connection with detection
efficiency and device-independent cryptography
\cite{BrunnerEtAlDetection,SuDIQKD}, and used repeatedly as a benchmark
for methods that bound quantum correlations
\cite{LiangDoherty,GigenaKaniewski,BernardsGuhne,Mortimer2025,FloraEtAl2026}.

P\'al and V\'ertesi (PV) found a family of $I_{3322}$ strategies
with a simple repeating pattern \cite{PV}: on a state with Schmidt
coefficients $\lambda_1,\lambda_2,\ldots$, Alice's first two
measurements act within two-dimensional blocks of consecutive Schmidt
vectors, Bob's within blocks offset by one vector, and the third
measurements use the complementary pairings.  Optimizing this family in
increasing dimension produced values approaching
\[
 0.25087538\ldots.
\]
They conjectured both that this is the supremum over all
finite-dimensional strategies and that no finite-dimensional strategy
attains it.  The $I_{3322}$ attainment question was highlighted in
Ref.~\cite{DeltaGame}, which also notes the general connection between
closedness and attainment.
Coladangelo and Stark later
restated the second half in a stronger form, conjecturing that any
correlation maximally violating $I_{3322}$ requires
infinite-dimensional entanglement \cite{ColadangeloStark}.  Explicit bounds from
the Navascu\'es--Pironio--Ac\'in hierarchy supported these claims
but did not prove either statement \cite{NPA1,NPA2,AraujoEtAl2026}.

We prove both parts of P\'al and V\'ertesi's conjecture, as statements
about finite-dimensional strategies.  The supremum over all
finite-dimensional quantum strategies equals the supremum of the finite
P\'al--V\'ertesi family, and the value of every individual
finite-dimensional strategy lies strictly below it.  This proves that no
finite-dimensional correlation maximally violates $I_{3322}$; whether an
infinite-dimensional tensor-product correlation attains the supremum remains
open (Sec.~\ref{sec:consequences}).
It follows that $\mathcal C_q$ is not closed already in
the $(3,3,2,2)$ scenario, and that approaching the supremum requires
unbounded local dimension, in the sense of dimension witnesses
\cite{DimWitness,VertesiPalDimension,NV}.

Sections~\ref{sec:problem}--\ref{sec:strictness} give the proof,
Sec.~\ref{sec:consequences} develops its consequences and open questions,
and Appendix~\ref{app:lean} describes the accompanying Lean~4 development.

\section{The Bell functional and the P\'al--V\'ertesi family}
\label{sec:problem}

\subsection{The Bell functional}
\label{sec:bell-functional}

A finite-dimensional quantum strategy $\mathsf S$ consists of a
state on $\mathbb C^{d_A}\otimes\mathbb C^{d_B}$, for arbitrary
finite dimensions $d_A$ and $d_B$, together with effects
$0\leq A_x\leq\1$ and $0\leq B_y\leq\1$, one for each $x$ and
each $y$ in $\{1,2,3\}$.  Outcomes are labeled $0$ and $1$, and
$A_x$ and $B_y$ are the effects of outcome $1$, so that
$\langle A_xB_y\rangle=P(11|xy)$ and
$\langle A_x\rangle=P_A(1|x)$.  Our normalization is
\begin{align}
I_{3322}={}&-\langle A_2\rangle-\langle B_1\rangle
-2\langle B_2\rangle \nonumber\\
&+\langle A_1B_1\rangle+\langle A_1B_2\rangle
+\langle A_2B_1\rangle+\langle A_2B_2\rangle \nonumber\\
&-\langle A_1B_3\rangle+\langle A_2B_3\rangle
-\langle A_3B_1\rangle+\langle A_3B_2\rangle.
\label{eq:i3322}
\end{align}
Its local bound is zero~\cite{CollinsGisin}.  Let
\begin{equation}
\begin{aligned}
 \Istar:=\sup\bigl\{I_{3322}(\mathsf S):\;&
 \mathsf S\text{ is a finite-dimensional}\\[-2pt]
 &\text{quantum strategy}\bigr\}.
\end{aligned}
\label{eq:istar}
\end{equation}
The dimensions are unrestricted; the question is whether the supremum
is attained.

\subsection{The P\'al--V\'ertesi family}
\label{sec:pv-family}
We first recall the P\'al--V\'ertesi construction~\cite{PV}.
 In a Schmidt basis, take
\begin{equation}
 |\psi_n\rangle=\sum_{i=1}^n\lambda_i
 |i\rangle_A|i\rangle_B,
 \qquad \lambda_i\geq0,
 \qquad \sum_{i=1}^n\lambda_i^2=1.
\label{eq:pv-state}
\end{equation}
The $\lambda_i$ are the Schmidt coefficients of the state.  Alice's
first two measurements act on the pairs
$(2,3),(4,5),\ldots$ of Schmidt basis vectors, while Bob's act on the
offset pairs $(1,2),(3,4),\ldots$.  Their third measurements use the
complementary pairings.  Each two-dimensional block is characterized by
$|c|\in[0,1]$: on that block, half the sum of the party's first two
measurements, written as $\pm1$-valued observables, has eigenvalues
$\pm|c|$.  Below we use the signed eigenvalue $c\in[-1,1]$ as the
spectral label.  The magnitude $|c|$ records how much the two measurements
agree: at $|c|=1$ they coincide on the block, at $c=0$ one is the other
with its outcomes flipped, and intermediate values interpolate.  Within the
block, the two projectors differ only
in the sign of their off-diagonal entries, whose magnitude is
$\sqrt{1-c^2}/2$.  
Figure~\ref{fig:pv-family} shows the complete pairing pattern.

\begin{figure*}[t]
\centering
\begin{tikzpicture}[
  v3 every figure,
  x=1cm,
  y=1cm,
  pv basis/.style={
    draw=black!48,
    rounded corners=1pt,
    fill=white,
    minimum width=5.8mm,
    minimum height=5.4mm,
    inner sep=1pt
  }
]
  \node[v3 panel] at (0,4.10)
    {Offset pairings in the Schmidt basis};

  \node[v3 annotation, anchor=east, text=black] at (1.42,2.90)
    {Alice: $A_1,A_2$};
  \node[v3 annotation, anchor=east, text=black] at (1.42,1.62)
    {Bob: $B_1,B_2$};

  \node[pv basis] (alice1) at (1.90,2.90) {$\lvert1\rangle$};
  \node[pv basis] (alice2) at (2.98,2.90) {$\lvert2\rangle$};
  \node[pv basis] (alice3) at (4.06,2.90) {$\lvert3\rangle$};
  \node[pv basis] (alice4) at (5.14,2.90) {$\lvert4\rangle$};
  \node[pv basis] (alice5) at (6.22,2.90) {$\lvert5\rangle$};
  \node[pv basis] (alice6) at (7.30,2.90) {$\lvert6\rangle$};
  \node[pv basis] (alice7) at (8.38,2.90) {$\lvert7\rangle$};
  \node[pv basis] (bob1) at (1.90,1.62) {$\lvert1\rangle$};
  \node[pv basis] (bob2) at (2.98,1.62) {$\lvert2\rangle$};
  \node[pv basis] (bob3) at (4.06,1.62) {$\lvert3\rangle$};
  \node[pv basis] (bob4) at (5.14,1.62) {$\lvert4\rangle$};
  \node[pv basis] (bob5) at (6.22,1.62) {$\lvert5\rangle$};
  \node[pv basis] (bob6) at (7.30,1.62) {$\lvert6\rangle$};
  \node[pv basis] (bob7) at (8.38,1.62) {$\lvert7\rangle$};

  \node[draw=vthreeexact, very thick, rounded corners=3pt,
        fit=(alice1), inner xsep=3pt, inner ysep=3pt] (aone) {};
  \node[draw=vthreeexact, very thick, rounded corners=3pt,
        fit=(alice2)(alice3), inner xsep=3pt, inner ysep=3pt] (a23) {};
  \node[draw=vthreeexact, very thick, rounded corners=3pt,
        fit=(alice4)(alice5), inner xsep=3pt, inner ysep=3pt] (a45) {};
  \node[draw=vthreeexact, very thick, rounded corners=3pt,
        fit=(alice6)(alice7), inner xsep=3pt, inner ysep=3pt] (a67) {};
  \node[v3 annotation, text=vthreeexact, anchor=south]
    at ([yshift=1pt]aone.north) {$c_0=1$};
  \node[v3 annotation, text=vthreeexact, anchor=south]
    at ([yshift=1pt]a23.north) {$c_2$};
  \node[v3 annotation, text=vthreeexact, anchor=south]
    at ([yshift=1pt]a45.north) {$c_4$};
  \node[v3 annotation, text=vthreeexact, anchor=south]
    at ([yshift=1pt]a67.north) {$c_6$};

  \node[draw=vthreepath, very thick, rounded corners=3pt,
        fit=(bob1)(bob2), inner xsep=3pt, inner ysep=3pt] (b12) {};
  \node[draw=vthreepath, very thick, rounded corners=3pt,
        fit=(bob3)(bob4), inner xsep=3pt, inner ysep=3pt] (b34) {};
  \node[draw=vthreepath, very thick, rounded corners=3pt,
        fit=(bob5)(bob6), inner xsep=3pt, inner ysep=3pt] (b56) {};
  \node[draw=vthreepath, very thick, rounded corners=3pt,
        fit=(bob7), inner xsep=3pt, inner ysep=3pt] (bzero) {};
  \node[v3 annotation, text=vthreepath, anchor=north]
    at ([yshift=-1pt]b12.south) {$c_1$};
  \node[v3 annotation, text=vthreepath, anchor=north]
    at ([yshift=-1pt]b34.south) {$c_3$};
  \node[v3 annotation, text=vthreepath, anchor=north]
    at ([yshift=-1pt]b56.south) {$c_5$};
  \node[v3 annotation, text=vthreepath, anchor=north]
    at ([yshift=-1pt]bzero.south) {$c_n=-1$};

\end{tikzpicture}
 \caption{\label{fig:pv-family}
The P\'al--V\'ertesi construction.  Alice's and Bob's first two
measurements use two-dimensional blocks offset by one Schmidt basis
vector; the third measurements use the complementary pairings.  A finite
PV strategy is specified by its Schmidt coefficients $\lambda_i$ and
the spectral labels $c_i$ appearing in
Eq.~\eqref{eq:pv-value}.}
\end{figure*}

Define
\begin{equation}
 s(c):=\sqrt{1-c^2},
 \qquad
 d(a,b):=ab+\frac{a-b}{2}-1.
\label{eq:ds-main}
\end{equation}
Let $c_1,\ldots,c_{n-1}\in[-1,1]$ be the internal spectral labels, and
set $c_0=1$, $c_n=-1$ for the two ends, where the unpaired basis
vectors form one-dimensional (scalar) blocks.  Direct substitution
gives the Bell value
\begin{equation}
\mathcal P_n(c,\lambda)
=\frac{
 \displaystyle\sum_{i=1}^n d(c_{i-1},c_i)\lambda_i^2
 +\displaystyle\sum_{i=1}^{n-1}s(c_i)\lambda_i\lambda_{i+1}}
 {\displaystyle\sum_{i=1}^n\lambda_i^2}.
\label{eq:pv-value}
\end{equation}
The denominator lets us work with unnormalized Schmidt coefficients:
dividing the $\lambda_i$ by $\sqrt{\sum_j\lambda_j^2}$ gives the
Schmidt coefficients of the physical state.
For odd $n$, the block projectors of Ref.~\cite{PV} realize arbitrary
$c_1,\ldots,c_{n-1}$; the corresponding Bell operator in the Schmidt
basis has diagonal entries $d(c_{i-1},c_i)$ and neighboring entries
$s(c_i)/2$, so its normalized quadratic expression is
Eq.~\eqref{eq:pv-value}.  If
$n$ is even, adjoining
$c_{n+1}=-1$ and $\lambda_{n+1}=0$ embeds the same chain in the
next odd local dimension without changing its value.
The same data can be laid out on a line:
\begin{equation}
 c_0\xleftrightarrow{\ \lambda_1\ }c_1
 \xleftrightarrow{\ \lambda_2\ }\cdots
 \xleftrightarrow{\ \lambda_n\ }c_n.
\label{eq:pv-list-picture}
\end{equation}
Thus $n$ is the length of the Schmidt-coefficient vector and equals the
local dimension in the odd-$n$ realization.  An even-length chain is
realized in local dimension $n+1$ by the zero padding above, while its
Schmidt rank remains at most $n$.
We call this alternating list of spectral labels and Schmidt coefficients a
\emph{PV chain}.  Each Schmidt coefficient $\lambda_i$ sits between the labels
$(c_{i-1},c_i)$: the $d$ term is a diagonal contribution, while
the $s$ term couples neighboring Schmidt coefficients.
Two elementary features of Eq.~\eqref{eq:pv-value} will matter.  First,
$d\leq0$ on $[-1,1]^2$, with equality only at $(1,1)$ and
$(-1,-1)$, so every positive contribution comes from the
nearest-neighbor terms.  Second, if every internal label equals
the same $c$, then
$2\lambda_i\lambda_{i+1}\leq\lambda_i^2+\lambda_{i+1}^2$ gives a
interior coefficient $d(c,c)+s(c)=s(c)-s(c)^2\leq1/4$; the two endpoint
coefficients are even smaller.  Such a chain therefore has value at
most $1/4$, the exact two-qubit threshold proved in
Appendix~\ref{app:qubit-bound}.  Exceeding $1/4$ requires the spectral
labels to vary along the chain.

We define the PV supremum directly from this family:
\begin{equation}
\betaPV:=
\sup_{\substack{n\geq1,\ c_0=1,\ c_n=-1\\
 c_1,\ldots,c_{n-1}\in[-1,1]\\
 \lambda\in\mathbb R_{\geq0}^n\setminus\{0\}}}
\mathcal P_n(c,\lambda).
\label{eq:beta-pv}
\end{equation}
For fixed $n$, normalization makes the parameter domain compact, so
the maximum exists.  The supremum over all $n$ need not be attained
at any finite $n$.

\subsection{Results and proof overview}
\label{sec:results}

The conjecture has two logically distinct parts.

\begin{theorem}[The PV family gives the finite-dimensional supremum]
\label{thm:variational}
The finite-dimensional quantum supremum equals the supremum of the finite
PV family:
\begin{equation}
 \Istar=\betaPV.
\label{eq:pv-identity}
\end{equation}
\end{theorem}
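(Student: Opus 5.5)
The equality splits into two inequalities. The easy direction is $\betaPV\le\Istar$: by the discussion after Eq.~\eqref{eq:pv-value}, every PV chain with odd $n$ is realized by explicit block projectors on $\mathbb C^n\otimes\mathbb C^n$, and even $n$ embeds by zero padding. So $\mathcal P_n(c,\lambda)$ is the value of a genuine finite-dimensional strategy, and taking the supremum over $n,c,\lambda$ gives $\betaPV\le\Istar$. All the work is in the reverse inequality: every finite-dimensional strategy $\mathsf S$ has $I_{3322}(\mathsf S)\le\betaPV$.

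For the upper bound I would first normalize the strategy. By Naimark dilation in finite dimension and purification, I may assume projective effects and a pure state $|\psi\rangle$. Next I regroup Eq.~\eqref{eq:i3322} as
\[
I_{3322}=\langle(A_1+A_2)(B_1+B_2)\rangle-\langle A_2\rangle-\langle B_1\rangle-2\langle B_2\rangle+\langle (A_2-A_1)B_3\rangle+\langle A_3(B_2-B_1)\rangle .
\]
This makes visible that $A_3,B_3$ enter only through the antisymmetric combinations $B_2-B_1$ and $A_2-A_1$. Jordan's lemma applied to the pairs $(A_1,A_2)$ and $(B_1,B_2)$ splits each local space into blocks of dimension at most two. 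On each block the symmetric combination has eigenvalues determined by a label $c\in[-1,1]$, exactly as in Sec.~\ref{sec:pv-family}, and the antisymmetric combination acts with strength $s(c)/2$. Decomposing $|\psi\rangle$ along the spectral subspaces of Alice's and Bob's symmetric operators produces a finite nonnegative matrix $p_{c,c'}$: the weight of the state on each pair of subspaces. The diagonal terms and the $A_1,A_2,B_1,B_2$ correlators then contribute $\sum_{c,c'}d(c,c')p_{c,c'}$, up to the appropriate orientation of labels.

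The cross terms $\langle(A_2-A_1)B_3\rangle$ and $\langle A_3(B_2-B_1)\rangle$ couple different spectral subspaces. I would bound each of them by Cauchy--Schwarz, using only $\|A_3\|,\|B_3\|\le1$, giving something of the form $\sum s(c)\sqrt{p_{\cdot,c}\,p_{c,\cdot}}$. The result is an upper bound $I_{3322}(\mathsf S)\le F(p)$, where $F$ is a ratio of a quadratic form to $\sum p$ on a weighted bipartite graph whose vertices are labels. In the variables $\mu=\sqrt p$, it has diagonal weights $d$ and edge weights $s$, with the scalar blocks $c=\pm1$ playing the role of the chain ends $c_0=1$, $c_n=-1$.

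The main obstacle is the final step: showing $\sup F\le\betaPV$, that is, that an arbitrary finite label graph never beats a path. I would treat $F$ as a Rayleigh quotient, so that $\sup F$ is the top eigenvalue of a symmetric matrix with nonnegative off-diagonal entries. A Perron eigenvector can then be taken nonnegative. The first idea to try is unfolding the graph along walks: the top eigenvalue of a nonnegative graph is approached by restrictions of the operator to long paths in its universal cover (a tree). Then a tree-to-path argument, using that each block contributes a single $s(c)$ coupling to at most two neighbours, reduces to a linear chain with internal labels $c_i$ and terminal scalar blocks. Each such chain is precisely a PV chain with value $\mathcal P_n(c,\lambda)$, so $F(p)\le\betaPV$ and hence $\Istar\le\betaPV$.

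I expect the delicate points to be twofold. First, checking that the Cauchy--Schwarz step loses nothing in the structure, so that the edge weights are exactly $s(c)$ with each label used on only one link. Second, handling labels that appear with multiplicity, which I would merge by a convexity argument on the quotient before unfolding.
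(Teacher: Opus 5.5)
Your easy direction is correct, and your route to an upper-bound functional is viable. Keeping the two spectral decompositions separate and applying Cauchy--Schwarz on each side is how the Lean formalization derives it, whereas the main text first doubles the strategy. One correction, though. Without a symmetrized strategy, Cauchy--Schwarz gives $\sqrt{R^A_cR^A_{-c}}$ and $\sqrt{R^B_{c'}R^B_{-c'}}$, two marginals of the \emph{same} party, rather than a mixed product $\sqrt{p_{\cdot,c}\,p_{c,\cdot}}$. To reach $\Phi$ you must pass to $\tilde p_{cc'}=\tfrac12(p_{cc'}+p_{-c',-c})$, using $d(-c',-c)=d(c,c')$ and $\sqrt{(x+y)(z+w)}\ge\sqrt{xw}+\sqrt{yz}$; this gives $I_{3322}(\mathsf S)\le\Phi(\tilde p)$.

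The genuine gap is the step you flag yourself, $\sup\Phi\le\betaPV$, and the proposed spectral-graph argument fails at each stage.
\begin{itemize}
\item $\Phi$ is not a Rayleigh quotient in $\mu=\sqrt p$. Its coupling term $s(c)\sqrt{R_cC_c}=s(c)\lVert\mu_{c,\cdot}\rVert\,\lVert\mu_{\cdot,c}\rVert$ is a product of norms, not a bilinear form.
\item The natural quadratic form with edge weight $s(c)$ joining every entry $(a,c)$ to every entry $(c,b)$ only majorizes $\Phi$, and it is unbounded. Take $m$ distinct labels near $0$ and $\mu_{ab}=1/m$: the form is about $m-1$, while $\Phi\approx0$.
\item A label is not coupled to at most two neighbours: it couples all of column $c$ to all of row $c$.
\item Even for a genuine nonnegative matrix, restricting to long paths of the universal cover does not approach the top eigenvalue once the graph has two independent cycles. $K_4$ has top eigenvalue $3$, its cover has norm $2\sqrt2$, and paths stay below $2$.
\item Tree-to-path reduction loses in the same way: the star $K_{1,k}$ has top eigenvalue $\sqrt k$.
\end{itemize}

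The missing idea is to average over a weighted collection of chains instead of restricting to one path. This works only because the coupling at each label has product (rank-one) form. Assign to every label sequence $\gamma\in\mathcal X^{N+1}$ the squared coefficients of Eq.~\eqref{eq:all-index-coefficients}, built from the backward kernel $\theta_{ab}/C_b$ and the forward kernel $\theta_{ab}/R_a$. The chains then reproduce $\theta_{ab}$ at every position, so Cauchy--Schwarz gives $\sum_{\gamma:c_i^\gamma=c}\lambda_i^\gamma\lambda_{i+1}^\gamma\le\sqrt{C_cR_c}$. Equality holds because $\lambda_{i+1}^\gamma/\lambda_i^\gamma=\sqrt{R_c/C_c}$ depends only on the shared label $c$. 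The pooled value is $\sum_{a,b}d(a,b)\theta_{ab}+\frac{N-1}{N}\sum_c s(c)\sqrt{R_cC_c}$. It is a weighted average of finite PV values and hence at most $\betaPV$. Let $N\to\infty$, and mix in a small uniform matrix when a marginal vanishes, to get $\Phi(\theta)\le\betaPV$. This is Lemma~\ref{lem:finite-sequence-averaging}; no eigenvector or covering argument is needed.
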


Theorem~\ref{thm:variational} says that no construction, however
different from the PV family, can improve on the PV supremum.  The
proof is the string of inequalities
\[
 I_{3322}(\mathsf S)
 \;\leq\;
 \Phi(\theta)
 \;\leq\;
 \betaPV
 \;\leq\;
 \Istar,
\]
valid for every finite-dimensional strategy $\mathsf S$; taking the
supremum over $\mathsf S$ turns the whole line into equalities.

Here $\theta$ is the spectral-weight matrix constructed in
Sec.~\ref{sec:strategy-table}: a joint probability distribution over spectral
labels associated with Alice's and Bob's first two measurements.  To obtain it, a direct-sum doubling replaces
any strategy by an equally valued symmetric one in which Bob's
measurements are outcome-flipped, conjugated copies of Alice's.  The third measurements,
which enter linearly, can then be optimized in closed form, and
Jordan's lemma reduces the remaining pair of measurements to finitely
many spectral subspaces.  One Cauchy--Schwarz estimate bounds the Bell
value by the explicit functional $\Phi(\theta)$ of their joint
weights (Lemma~\ref{lem:table-bound}).

Conversely, for every length $N$, these weights generate a finite
collection of PV chains whose weighted-average value
$\Phi_N(\theta)$ converges to $\Phi(\theta)$ as $N\to\infty$
(Sec.~\ref{sec:table-to-pv}).  Thus every $\Phi(\theta)$ is obtained in
the large-dimension limit as a weighted average of the Bell values of finite
PV strategies.
Since each such average is at most $\betaPV$, it follows that
$\Phi(\theta)\leq\betaPV$.

Finally, every finite PV chain is itself
a finite-dimensional strategy by the realization above.

\begin{theorem}[Finite-dimensional nonattainment]
\label{thm:nonattainment}
No finite-dimensional quantum strategy attains $\Istar$.
\end{theorem}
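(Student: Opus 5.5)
We argue by contradiction.  Suppose a finite-dimensional strategy $\mathsf S$ attains $I_{3322}(\mathsf S)=\Istar$.  We then follow equality through the chain of Theorem~\ref{thm:variational}, $I_{3322}(\mathsf S)\leq\Phi(\theta)\leq\betaPV=\Istar$.  Here $\theta$ is the spectral-weight matrix of the symmetrized (direct-sum doubled) strategy.  The doubling preserves the value and dimension finiteness, so we may assume $\mathsf S$ is symmetric.  Because $\mathsf S$ is finite-dimensional, Jordan's lemma gives only finitely many spectral subspaces.  Hence $\theta$ is a finite matrix supported on a finite set of labels $\{c^{(1)},\dots,c^{(m)}\}\subset[-1,1]$.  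Both inequalities must then be equalities: $I_{3322}(\mathsf S)=\Phi(\theta)$ and $\Phi(\theta)=\betaPV$.

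\textbf{Step 1: equality in the table bound.}  In Lemma~\ref{lem:table-bound}, the closed-form optimization of the third measurements and the single Cauchy--Schwarz step are tight only under specific alignment conditions.  I would write out these conditions explicitly.  The Cauchy--Schwarz equality should force the state's components between neighbouring spectral subspaces to be proportional with a fixed ratio.  That is, the state restricted to the pairs of Jordan blocks must satisfy a linear recursion of the same nearest-neighbour form as the PV chain Bell operator (diagonal $d$, off-diagonal $s/2$).

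\textbf{Step 2: equality $\Phi(\theta)=\betaPV$ as a stationarity condition.}  Since $\Phi(\theta)\le\betaPV$ holds for all admissible $\theta$ (by the PV-chain approximation of Sec.~\ref{sec:table-to-pv}), a maximizing $\theta$ must satisfy first-order optimality conditions.  These include the Lagrange condition for the weights and the eigenvector condition for the quadratic form.  I expect these conditions to reduce to an eigen-equation on an infinite chain.  Concretely, one needs $\lambda_i$ with $(d(c_{i-1},c_i)-\betaPV)\lambda_i+\tfrac12 s(c_i)\lambda_{i+1}+\tfrac12 s(c_{i-1})\lambda_{i-1}=0$, together with the stationarity equations in the $c_i$.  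The finite support of $\theta$ means the labels can take only finitely many values.  The chain, however, must either terminate at the ends $c_0=1$, $c_n=-1$, or continue indefinitely.

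\textbf{Step 3: no normalizable solution.}  This is the main obstacle.  First, a terminating finite chain cannot be optimal, since the paper's PV values strictly increase with $n$; I would prove strictness by extending an optimal chain and perturbing.  So the optimality equations must be solved by an infinite sequence.  I would then analyse the recursion near its fixed point.  The labels should converge to the value $c^\ast$ with $s(c^\ast)-s(c^\ast)^2$ close to $1/4$, i.e.\ $s(c^\ast)\approx 1/2$.  Linearizing the coupled recursion for $(c_i,\lambda_i)$ there, I would show that the transfer matrix at $\beta=\betaPV$ has its relevant eigenvalue on the unit circle, or at least of modulus $\ge1$.  This would mean $\lambda_i$ decays at most polynomially, or not at all, so $\sum_i\lambda_i^2=\infty$.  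Such a state cannot be normalized, contradicting finite dimensionality.  Finiteness of the label set also gives a simpler contradiction route: with only finitely many labels, the eigen-equation is a finite-state recursion.  Its solution must be eventually periodic, and a periodic $\lambda$ is not square-summable unless it is zero.  I would try this finite-support argument first, and fall back to the transfer-matrix asymptotics if periodicity cannot be forced.
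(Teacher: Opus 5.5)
Your outline has the right frame: force $\Phi(\theta)=\betaPV$, use first-order conditions to get a nearest-neighbour recurrence, then derive a contradiction. Your ``cannot terminate'' step also matches the paper's appending argument, modulo the special endpoint case $c=\pm1$. But the two steps that carry the proof are missing or would fail.

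\textbf{First gap: no mechanism for extracting a chain from $\theta$.} You never say how the finite matrix $\theta$ produces a chain, and $\Phi$ is not a Rayleigh quotient, so there is no ``eigenvector condition'' to read off. The paper's route has four steps.
\begin{itemize}
\item Perturbing $\theta$ toward $\delta_{ab}$ gives $h(a,b)\le\betaPV$, where $h(a,b)=d(a,b)+\tfrac{s(a)}2\sqrt{C_a/R_a}+\tfrac{s(b)}2\sqrt{R_b/C_b}$.
\item The exact identity $\sum_{a,b}\theta_{ab}h(a,b)=\Phi(\theta)$ then forces $h=\betaPV$ on every positive entry.
\item Comparing two positive entries with their crossed pairs gives the ordering $a<a'\Rightarrow b\le b'$. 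A path of positive entries with distinct successive labels is therefore monotone, hence finite. It can only be continued by the diagonal entries $(c_0,c_0)$ and $(c_L,c_L)$, repeated indefinitely.
\item Setting $u_{i+1}/u_i=\sqrt{R_{c_i}/C_{c_i}}$ turns $h=\betaPV$ into the recurrence. The tails are \emph{exactly} constant, with ratio $\kappa=\sqrt{R_{c_0}/C_{c_0}}>1$ on the left.
\end{itemize}
These $u_i$ are auxiliary numbers, not Schmidt coefficients of $\mathsf S$. So your Step~1 (equality in Cauchy--Schwarz) is not needed. Likewise, ``the state cannot be normalized'' cannot be the contradiction, since the state of $\mathsf S$ is normalized by assumption.

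\textbf{Second gap: the proposed contradiction does not work.} For a constant label $c\in(-1,1)$, the recurrence gives $\kappa+\kappa^{-1}=2(\betaPV+s(c)^2)/s(c)>2$, because $s-s^2\le 1/4<\betaPV$. The transfer matrix is therefore hyperbolic, not on the unit circle, and the tails decay geometrically. The auxiliary sequence genuinely is square-summable, so no asymptotic or linearization argument can show otherwise; Remark~\ref{rem:scope} notes that sequences whose labels merely converge are not excluded. Your periodicity fallback fails for the same reason: eventually periodic labels and ratios give $u_{i+p}=Ku_i$, which is square-summable when $K<1$.

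\textbf{The missing idea is forward determinism.} Once the chain is shown to be bi-infinite with interior labels, two equations together determine $(\kappa_i,c_{i+1})$ uniquely from $(c_{i-1},c_i,\kappa_{i-1})$: the ratio recurrence, and the label-stationarity equation $\bigl(c_{i-1}-\tfrac12\bigr)+\bigl(c_{i+1}+\tfrac12\bigr)\kappa_i^2-\tfrac{c_i}{s(c_i)}\kappa_i=0$. The exactly constant left tail $(c_-,\kappa)$ is a fixed point of this forward map. It therefore persists for all $i$, giving $u_{i+1}=\kappa u_i$ with $\kappa>1$ to the right, which contradicts the decaying right tail. Finiteness of $\theta$ enters through the exact constancy of the tails, not through asymptotics.
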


For nonattainment, the spectral-weight matrix is now viewed as an auxiliary
joint probability distribution.  If a finite-dimensional strategy were
optimal, the optimality conditions for this distribution would force an
associated construction with the same form as a PV strategy, but with a
coefficient sequence that cannot be normalized.  This contradiction proves
Theorem~\ref{thm:nonattainment} (Sec.~\ref{sec:strictness}).

The theorems have two immediate consequences.

\begin{corollary}[Nonclosure]
\label{cor:nonclosure}
The set of finite-dimensional quantum correlations is not closed in the
$(3,3,2,2)$ scenario:
\begin{equation}
 \mathcal C_q(3,3,2,2)
 \subsetneq
 \mathcal C_{qa}(3,3,2,2),
\label{eq:nonclosure-main}
\end{equation}
where $\mathcal C_{qa}:=\overline{\mathcal C_q}$ is the closure of
$\mathcal C_q$.
\end{corollary}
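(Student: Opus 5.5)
The plan is to combine Theorems~\ref{thm:variational} and~\ref{thm:nonattainment} with a separation argument on the correlation sets. Let $\mathcal C_q=\mathcal C_q(3,3,2,2)$ denote the set of correlation vectors $p=\bigl(P_A(1|x),P_B(1|y),P(11|xy)\bigr)_{x,y}$ arising from finite-dimensional strategies. The functional~\eqref{eq:i3322} is a fixed linear (hence continuous) function $L$ of $p$, and $I_{3322}(\mathsf S)=L(p_{\mathsf S})$. Consequently
\[
\sup_{p\in\mathcal C_q}L(p)=\Istar=\betaPV.
\]

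First, I will exhibit a point of $\mathcal C_{qa}$ at which $L$ attains this value. Choose finite-dimensional strategies $\mathsf S_k$ with $I_{3322}(\mathsf S_k)\to\Istar$. Such strategies exist by the definition~\eqref{eq:istar}; concretely, one can take optimized PV chains of growing length, using $\Istar=\betaPV$ and~\eqref{eq:beta-pv}. The associated correlation vectors $p_k$ lie in $[0,1]^{15}$, which is compact, so a subsequence converges to some $p_\infty\in[0,1]^{15}$. By construction $p_\infty\in\overline{\mathcal C_q}=\mathcal C_{qa}$. By continuity of $L$,
\[
L(p_\infty)=\lim_k I_{3322}(\mathsf S_k)=\Istar .
\]

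Second, I will show that $p_\infty\notin\mathcal C_q$. If $p_\infty=p_{\mathsf S}$ for some finite-dimensional strategy $\mathsf S$, then $I_{3322}(\mathsf S)=L(p_\infty)=\Istar$, so $\mathsf S$ attains the supremum. This contradicts Theorem~\ref{thm:nonattainment}. Hence $p_\infty\in\mathcal C_{qa}\setminus\mathcal C_q$. Together with the trivial inclusion $\mathcal C_q\subseteq\overline{\mathcal C_q}$, this gives the strict inclusion~\eqref{eq:nonclosure-main}.

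No step here is a real obstacle, since all the substance is carried by the two theorems. The only point needing care is that membership in $\mathcal C_q$ is a statement about the correlation vector alone, independent of the realizing strategy. Theorem~\ref{thm:nonattainment} rules out \emph{every} finite-dimensional realization, in any dimensions $d_A,d_B$, with arbitrary effects and a possibly mixed state. So no realization of $p_\infty$ can escape the contradiction, and Theorem~\ref{thm:variational} is used only to identify the limit value.
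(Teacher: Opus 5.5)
Your proposal is correct and is essentially the paper's own argument. You take a maximizing sequence of finite-dimensional correlations (for example from PV chains), extract a convergent subsequence by compactness, use continuity of the linear Bell functional to get value $\Istar$ at the limit point, and invoke Theorem~\ref{thm:nonattainment} to exclude that point from $\mathcal C_q$; your remark that membership in $\mathcal C_q$ depends only on the correlation vector is exactly what this last step needs.
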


\begin{corollary}[Dimension witness]
\label{cor:dimension-witness}
Let $\beta_k$ be the maximum of $I_{3322}$ over strategies whose
local dimensions are both at most $k$.  Then
\begin{equation}
 \beta_k<\Istar
 \qquad\text{for every finite }k;
\label{eq:dimension-gap}
\end{equation}
any sequence of strategies whose values approach $\Istar$ has
unbounded local dimension.
\end{corollary}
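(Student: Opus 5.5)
The corollary follows from Theorems~\ref{thm:variational} and~\ref{thm:nonattainment} together with a compactness argument for strategies of bounded dimension. I will prove the two assertions in order.

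\emph{First assertion.} Fix $k$. The set of strategies with local dimensions at most $k$ is compact. Every strategy of smaller dimension embeds into dimension exactly $k$ by padding, without changing its correlation: extend the state by zero and the effects by zero on the added subspace. So it suffices to consider strategies on $\mathbb C^k\otimes\mathbb C^k$. These consist of a density operator $\rho$, which may be mixed, or a unit vector, together with six effects in $[0,\1]$. This parameter set is a product of compact sets. The map $\mathsf S\mapsto I_{3322}(\mathsf S)$ is a finite sum of terms $\Tr[\rho\,(A_x\otimes B_y)]$ and local marginals, so it is continuous. Hence $\beta_k$ is a maximum, attained by some strategy $\mathsf S_k$ of local dimension at most $k$. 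Clearly $\beta_k\leq\Istar$ by the definition~\eqref{eq:istar}. If $\beta_k=\Istar$, then $\mathsf S_k$ would be a finite-dimensional strategy attaining $\Istar$, contradicting Theorem~\ref{thm:nonattainment}. Therefore $\beta_k<\Istar$.

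One point needs care. If Theorem~\ref{thm:nonattainment} is stated only for pure states, I will purify a mixed optimal state. Purification enlarges Alice's space while leaving her effects on the original factor, and it does not change the correlation. The result is still finite-dimensional, so the contradiction persists.

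\emph{Second assertion.} Let $\mathsf S^{(m)}$ be strategies with $I_{3322}(\mathsf S^{(m)})\to\Istar$, and suppose their local dimensions are bounded by some $k$. Then $I_{3322}(\mathsf S^{(m)})\leq\beta_k$ for every $m$. Passing to the limit gives $\Istar\leq\beta_k$, which contradicts the first assertion. So the local dimensions along the sequence are unbounded.

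Theorem~\ref{thm:variational} is not needed here, beyond guaranteeing that $\Istar$ is finite and is the relevant supremum. The only real obstacle is the compactness and attainment step at fixed $k$, which is routine. The substantive content lies entirely in Theorem~\ref{thm:nonattainment}, which is assumed.
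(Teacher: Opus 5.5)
Your proof is correct and follows the paper's argument: you pad to a fixed dimension, use compactness and continuity to show that $\beta_k$ is attained, and then invoke Theorem~\ref{thm:nonattainment} to get $\beta_k<\Istar$. Your derivation of the unbounded-dimension clause from $\beta_k<\Istar$ spells out a step the paper leaves implicit, and your purification remark is harmless but unnecessary, since the paper's definition of a finite-dimensional strategy already allows mixed states and general effects.
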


\section{From strategies to spectral weights}
\label{sec:strategy-table}

We now prove the first half of Theorem~\ref{thm:variational},
reducing an arbitrary strategy to a finite joint distribution of
spectral labels whose upper-bound functional controls the Bell value
(Lemma~\ref{lem:table-bound}).

\subsection{Dichotomic observables and the symmetric form}
\label{sec:symmetric-form}

Purification and a finite-dimensional Naimark dilation let us assume a
pure state and projective binary measurements without changing any
probability.
We pass from the $0/1$-valued effects to dichotomic
observables---Hermitian
operators squaring to the identity, which describe the same measurements
recorded with outcomes $\pm1$:
\[
 a_x=2A_x-\1,
 \qquad
 b_y=2B_y-\1.
\]
Alice's first two measurements enter through their sum and their
difference: the sum supplies the spectral labels used below; the
difference couples linearly to the third measurement.  Put
\[
 p=a_1+a_2,
 \qquad
 r=a_2-a_1.
\]
Then
\begin{equation}
 p^2+r^2=4\1,
 \qquad
 pr+rp=0.
\label{eq:pair-algebra}
\end{equation}
In these variables,
\begin{equation}
4(1+I_{3322})
=\langle p\rangle-\langle q\rangle+\langle pq\rangle
 +\langle rb_3\rangle+\langle a_3\tau\rangle,
\label{eq:F-reflections}
\end{equation}
where $q=b_1+b_2$ and $\tau=b_2-b_1$ are Bob's analogues of $p$
and $r$.
Up to local outcome relabelings, Eq.~\eqref{eq:F-reflections} is the
standard party-permutation-invariant form of $I_{3322}$
\cite{Sliwa2003,BrunnerGisin2008}.  More precisely, flipping the
outcomes of $a_2$, $b_1$, and $b_3$ converts it into the
sum-and-difference factorization written in Eq.~(2) of
Ref.~\cite{BernardsGuhne}.

After padding the smaller local space, encode the state
in a matrix:
$|\psi\rangle=\operatorname{vec}(D):=\sum_{ij}D_{ij}\,|i\rangle_A|j\rangle_B$.
For a state written in its Schmidt basis, $D$ is diagonal and
nonnegative, with the Schmidt coefficients on its diagonal, and
$\Tr D^2=1$.  This encoding gives the identity
$\langle\psi|X\otimes Y|\psi\rangle=\Tr(DXDY^{\mathsf T})$, which
turns every expectation value into a trace on a single local space.

We now use the symmetry of $I_{3322}$ to reduce the number of independent measurements.
A standard direct-sum symmetrization with a classical flag
\cite{MoroderEtAl,HsuEtAl} doubles the local spaces and gives the
following exact form.

\begin{lemma}[Symmetric form]
\label{lem:symmetric-form}
Every finite-dimensional strategy has the same Bell value as one satisfying
\begin{equation}
\begin{gathered}
 |\psi\rangle=\operatorname{vec}(D),
 \qquad D\geq0,
 \qquad \Tr D^2=1,\\
 [D,W]=0,
 \qquad W^2=\1,
 \qquad b_i^{\mathsf T}=-Wa_iW,
\end{gathered}
\label{eq:symmetric-form}
\end{equation}
for a fixed Hermitian unitary $W$.
\end{lemma}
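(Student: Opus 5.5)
The plan is to combine the party-swap symmetry of Eq.~\eqref{eq:F-reflections} with a direct sum over a classical flag qubit, as in \cite{MoroderEtAl,HsuEtAl}. First, by purification, Naimark dilation and local unitaries I would bring an arbitrary strategy to the form $|\psi\rangle=\operatorname{vec}(D_0)$, where $D_0\geq0$ is diagonal with the Schmidt coefficients and $\Tr D_0^2=1$. The measurements are dichotomic observables $a_i,b_i$ on a common space $\mathbb C^d$, after padding the smaller space; padding does not change any expectation value.

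The key observation is that Eq.~\eqref{eq:F-reflections} is invariant under the substitution
\[
 a_i\mapsto -b_i,\qquad b_i\mapsto -a_i,
\]
together with exchanging the parties. This substitution sends $p\mapsto -q$, $q\mapsto -p$, $r\mapsto-\tau$ and $\tau\mapsto -r$. The right-hand side then becomes $\langle -q\rangle-\langle -p\rangle+\langle qp\rangle+\langle \tau a_3\rangle+\langle b_3 r\rangle$, which equals the original expression.

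To implement the swap in the vec encoding, I would use $\langle X\otimes Y\rangle=\Tr(D_0XD_0Y^{\mathsf T})$ and the fact that $D_0$ is real diagonal, hence symmetric. The swapped strategy is then realized on the same state $\operatorname{vec}(D_0)$ with Alice measuring $-b_i^{\mathsf T}$ and Bob measuring $-a_i^{\mathsf T}$. Indeed,
\[
 \Tr\bigl(D_0(-b^{\mathsf T})D_0(-a^{\mathsf T})^{\mathsf T}\bigr)=\Tr(D_0 b^{\mathsf T}D_0 a)=\Tr(D_0aD_0b^{\mathsf T}),
\]
and the single-party marginals transform in the same way, because $\Tr(D_0^2X)=\Tr(D_0^2X^{\mathsf T})$ for diagonal $D_0$. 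By the invariance above, this swapped strategy has the same Bell value. Transposition preserves Hermiticity and $X^2=\1$, so these are again valid dichotomic observables.

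Next I would take the direct sum on $\mathbb C^d\otimes\mathbb C^2$ on each side:
\[
 D=\tfrac1{\sqrt2}(D_0\oplus D_0),\qquad
 a_i'=a_i\oplus(-b_i^{\mathsf T}),\qquad
 b_i'=b_i\oplus(-a_i^{\mathsf T}),\qquad
 W=\begin{pmatrix}0&\1\\ \1&0\end{pmatrix}.
\]
Since $D$ and all observables are block diagonal, $\Tr(Da'Db'^{\mathsf T})$ and the marginals split into the average of the two blocks. The cross-flag terms vanish, so the Bell value is the average of two equal values and is unchanged. The remaining conditions are then direct checks. We have $D\geq0$, $\Tr D^2=1$, $W^2=\1$, and $[D,W]=0$ because both blocks of $D$ are equal. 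Finally, $-Wa_i'W=-\bigl((-b_i^{\mathsf T})\oplus a_i\bigr)=b_i^{\mathsf T}\oplus(-a_i)=(b_i')^{\mathsf T}$.

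The main point requiring care is the bookkeeping of transposes and signs, in particular making sure the second block is a genuine realization of the swapped strategy on the same vec-encoded state. Reducing first to a real diagonal $D_0$ is what makes this work cleanly: it gives $D_0^{\mathsf T}=D_0$, so the swapped state needs no separate treatment. I would also verify the linear terms $\langle p\rangle-\langle q\rangle$ explicitly, since the outcome flip is what makes them invariant.
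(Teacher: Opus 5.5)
Your proposal is correct and is essentially the paper's proof. It uses the same doubled state $(D_0\oplus D_0)/\sqrt2$, the same swap matrix $W$, the same blocks $a_i\oplus(-b_i^{\mathsf T})$ and $b_i\oplus(-a_i^{\mathsf T})$, and the same block-multiplication checks. The only difference is presentational: the paper observes that the second block is the complex conjugate of the party-swapped, outcome-flipped copy, which is harmless because the Schmidt matrix is real, whereas you verify this directly from $\langle X\otimes Y\rangle=\Tr(D_0XD_0Y^{\mathsf T})$ and cyclicity of the trace.
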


\begin{proof}
Start from the pure, projective strategy described above, written in a
Schmidt basis as $|\psi\rangle=\operatorname{vec}(D)$.  On doubled
local spaces define
\begin{align*}
 \widetilde D&=\frac{D\oplus D}{\sqrt2},
 &W&=\begin{pmatrix}0&\1\\ \1&0\end{pmatrix},\\
 \widetilde a_i&=a_i\oplus(-b_i^{\mathsf T}),
 &\widetilde b_i&=b_i\oplus(-a_i^{\mathsf T}).
\end{align*}
The second block is the complex conjugate of the copy obtained by the
combined operation that swaps the parties and flips every outcome.
Complex conjugation preserves all
probabilities because the Schmidt matrix $D$ is real, and
$I_{3322}$ is invariant under this combined operation.  Thus
the two equally weighted blocks have the same Bell value.  The doubled
measurements remain $\pm1$-valued observables, and direct block
multiplication gives
\[
 [\widetilde D,W]=0,
 \qquad W^2=\1,
 \qquad \widetilde b_i^{\mathsf T}=-W\widetilde a_iW.
\]
Dropping the tildes proves the claim.
\end{proof}

Thus a symmetric strategy is specified by $D$, $W$, and Alice's
three measurements; the transpose in Eq.~\eqref{eq:symmetric-form}
comes only from the coefficient-matrix convention for
$\operatorname{vec}$.

\subsection{Eliminating the third measurements}
\label{sec:third-measurements}

Alice's remaining independent third measurement can now be optimized exactly;
Bob's is fixed by symmetry.  Using the trace
identity above, cyclicity of the trace, and $[D,W]=0$, the three
parts of Eq.~\eqref{eq:F-reflections} become
\begin{align}
 \langle p\rangle-\langle q\rangle&=2\Tr(D^2p),\nonumber\\
 \langle pq\rangle&=-\Tr(DpDWpW),\nonumber\\
 \langle rb_3+a_3\tau\rangle&=-2\Tr(a_3WDrDW).
\label{eq:reduced-trace-terms}
\end{align}
In the last line, the symmetry $b_y^{\mathsf T}=-Wa_yW$ turns the
$\tau$ term into a second copy of the $r$ term, so only a
single trace appears.
We use the standard identity
$\max_{a_3^2=\1}[-2\Tr(a_3H)]=2\lVert H\rVert_1$.  Applied to
$H=WDrDW$, and using that $W$ is unitary, this gives
$2\lVert DrD\rVert_1$.  The analogous positive-eigenspace
optimization of the third measurements was used for $I_{3322}$ and
a maximally entangled state in Ref.~\cite{VidickWehner}; here the same
optimization is weighted by the Schmidt matrix $D$.  Consequently,
\begin{align}
\Istar=\sup_{D,W,a_1,a_2}\bigg\{-1+\frac14\Big[
&2\Tr(D^2p)-\Tr(DpDWpW)\nonumber\\
&+2\lVert DrD\rVert_1\Big]\bigg\},
\label{eq:reduced-I}
\end{align}
under the conditions in Eq.~\eqref{eq:symmetric-form} and
$a_1^2=a_2^2=\1$.
The two sides of Eq.~\eqref{eq:reduced-I} coincide because every
tuple $(D,W,a_1,a_2)$ defines a quantum strategy---take the optimal
$a_3$ and set $b_i=(-Wa_iW)^{\mathsf T}$---and because replacing
$a_3$ by its optimum never lowers a strategy's value.

\subsection{The spectral-weight matrix}
\label{sec:coupling-table}

After Eq.~\eqref{eq:reduced-I}, a strategy is fully specified by three
pieces of data: the state matrix $D$, the unitary $W$, and two independent
binary projective measurements.  Jordan's lemma
\cite{Jordan1875,PironioEtAl2009} applies to this pair, and we
now retain only finitely many scalar weights.  Let $E_c$ be the
spectral projector of $p=a_1+a_2$ onto the eigenvalue $2c$, where
the finitely many labels $c$ lie in $[-1,1]$.  We close this
label set under $c\mapsto-c$ by assigning a zero projector when
an eigenvalue is absent.  The eigenspaces associated with $c$ and
$-c$---the ranges of $E_c$ and $E_{-c}$---together decompose
into Jordan blocks with parameter $|c|$ (with the range counted only
once when $c=0$).  Thus $|c|$ is the Jordan-block parameter, while
$c$ is the signed spectral label used in the PV family.
Equation~\eqref{eq:pair-algebra} implies
\begin{equation}
 rE_c=E_{-c}r,
 \qquad
 r^2E_c=4s(c)^2E_c.
\label{eq:mirror-sectors}
\end{equation}
Thus the difference $r=a_2-a_1$ maps the $c$-eigenspace to the
paired $-c$-eigenspace, with magnitude $2s(c)$.

The same paired structure appears on Bob's side.  Let
$G_{c'}$ be the spectral projector of $q=b_1+b_2$ onto the
eigenvalue $2c'$; the symmetry $b_i^{\mathsf T}=-Wa_iW$ of
Eq.~\eqref{eq:symmetric-form} gives $q^{\mathsf T}=-WpW$ and hence
$G_{c'}^{\mathsf T}=WE_{-c'}W$.  Define the \emph{spectral-weight
matrix}
\begin{equation}
 \theta_{cc'}:=\langle\psi|\,E_c\otimes G_{c'}\,|\psi\rangle,
\label{eq:theta}
\end{equation}
which is the joint probability of spectral label $c$ for Alice
and $c'$ for Bob.  This is an auxiliary distribution over spectral
subspaces.  Writing $J=DW$,
so that $J=J^\dagger$ and $J^2=D^2$, the same quantity takes the
form we will use:
\begin{equation}
 \theta_{cc'}=\Tr(JE_cJE_{-c'})=\lVert E_cJE_{-c'}\rVert_2^2\geq0,
\label{eq:theta-hs}
\end{equation}
where the sign reversal in $E_{-c'}$ follows from the direct-sum swap
symmetry.  The row and column marginals are the corresponding spectral
weights,
\begin{align}
 R_c&:=\sum_{c'}\theta_{cc'}=\Tr(E_cD^2),\nonumber\\
 C_c&:=\sum_{c'}\theta_{c'c}=\Tr(E_{-c}D^2),
\label{eq:row-column}
\end{align}
and the total weight is $\sum_{c,c'}\theta_{cc'}=1$.

Direct expansion in the spectral projections of $p$, using
Eq.~\eqref{eq:theta-hs} and $J^2=D^2$, gives
\begin{align}
 \sum_{c,c'}c\,\theta_{cc'}&=\frac12\Tr(D^2p),\nonumber\\
 \sum_{c,c'}c'\,\theta_{cc'}&=-\frac12\Tr(D^2p),\nonumber\\
 \sum_{c,c'}cc'\,\theta_{cc'}&=-\frac14\Tr(pJpJ).
\label{eq:table-moments}
\end{align}
The minus sign in the second moment restates
$\langle q\rangle=-\langle p\rangle$, which the swap symmetry
enforces.  Substituting the definition of $d$ shows that the part of
Eq.~\eqref{eq:reduced-I} not containing $r$ is exactly
\begin{equation}
 \sum_{c,c'}d(c,c')\theta_{cc'}
 =-1+\frac14\left[
 2\Tr(D^2p)-\Tr(pJpJ)
 \right].
\label{eq:exact-table-term}
\end{equation}
Here $\Tr(pJpJ)=\Tr(DpDWpW)$ by cyclicity and $[D,W]=0$.

The following estimate is the only step in this reduction that can enlarge
the value:
\begin{align}
 \lVert DrD\rVert_1
 &\leq\sum_c\lVert DE_{-c}rE_cD\rVert_1\nonumber\\
 &\leq\sum_c\lVert DE_{-c}\rVert_2
                  \lVert rE_cD\rVert_2\nonumber\\
 &=2\sum_cs(c)\sqrt{R_cC_c}.
\label{eq:trace-bound}
\end{align}
Here the first step decomposes $r=\sum_cE_{-c}rE_c$, which follows
from Eq.~\eqref{eq:mirror-sectors}, and applies the triangle
inequality; the second step is the Cauchy--Schwarz inequality
$\lVert AB\rVert_1\leq\lVert A\rVert_2\lVert B\rVert_2$.  Each
paired-eigenspace contribution is therefore bounded by $s(c)$ times
the geometric mean of the corresponding marginals.

For any finite set $X\subset[-1,1]$ and any nonnegative matrix
$\theta=(\theta_{cc'})_{c,c'\in X}$ of total weight one, define
\begin{equation}
 \Phi(\theta):=
 \sum_{c,c'}d(c,c')\theta_{cc'}
 +\sum_cs(c)\sqrt{R_cC_c}.
\label{eq:Phi}
\end{equation}
The first sum is exact; the second is the upper bound supplied by
Eq.~\eqref{eq:trace-bound}.  We therefore refer to $\Phi$ simply as
the upper-bound functional.

\begin{lemma}[Spectral-weight bound]
\label{lem:table-bound}
Every finite-dimensional quantum strategy $\mathsf S$ yields a finite
spectral-weight matrix $\theta$ with
\begin{equation}
 I_{3322}(\mathsf S)\leq\Phi(\theta).
\label{eq:strategy-table}
\end{equation}
\end{lemma}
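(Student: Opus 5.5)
The plan is to chain together the reductions already made in this section, so the proof is mostly bookkeeping. Fix a finite-dimensional strategy $\mathsf S$.

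\emph{Step 1: pass to the symmetric form.} By purification and Naimark dilation we may assume a pure state and projective measurements. Lemma~\ref{lem:symmetric-form} then gives a symmetric strategy with the same Bell value, specified by $D$, $W$, $a_1,a_2,a_3$ satisfying Eq.~\eqref{eq:symmetric-form}. It suffices to bound the value of this symmetric strategy.

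\emph{Step 2: eliminate the third measurements.} From Eqs.~\eqref{eq:F-reflections} and~\eqref{eq:reduced-trace-terms} we have
\[
I_{3322}=-1+\tfrac14\bigl[2\Tr(D^2p)-\Tr(DpDWpW)-2\Tr(a_3WDrDW)\bigr].
\]
The trace-norm identity bounds the last term by $2\lVert DrD\rVert_1$, since $W$ is unitary. The result is the bracketed expression in Eq.~\eqref{eq:reduced-I}, and it is an upper bound on $I_{3322}(\mathsf S)$.

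\emph{Step 3: pass to spectral weights.} Define $\theta$ by Eq.~\eqref{eq:theta} over the finite label set $X$ of eigenvalues of $p/2$, closed under $c\mapsto -c$. Its entries are nonnegative by Eq.~\eqref{eq:theta-hs}. The projectors $E_c$ resolve the identity and the $G_{c'}$ do likewise, so $\sum_{c,c'}\theta_{cc'}=\lVert\psi\rVert^2=1$.

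\emph{Step 4: match each piece of $\Phi$.} The $r$-free part equals $\sum d(c,c')\theta_{cc'}$ exactly, by Eq.~\eqref{eq:exact-table-term}. To check this, expand $d(c,c')=cc'+(c-c')/2-1$ and substitute the moments in Eq.~\eqref{eq:table-moments}:
\[
-\tfrac14\Tr(pJpJ)+\tfrac12\cdot\tfrac12\Tr(D^2p)+\tfrac12\cdot\tfrac12\Tr(D^2p)-1.
\]
The $r$-part gives $\tfrac14\cdot 2\lVert DrD\rVert_1\le \tfrac12\cdot 2\sum_c s(c)\sqrt{R_cC_c}$ by Eq.~\eqref{eq:trace-bound}, which is the second sum of $\Phi$. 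Adding the two pieces gives $I_{3322}(\mathsf S)\le\Phi(\theta)$.

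\emph{Main obstacle: verifying Eq.~\eqref{eq:trace-bound}.} I would check this carefully, in particular the norm identities behind the last equality there. The key facts are:
\begin{itemize}
\item[(i)] From Eq.~\eqref{eq:mirror-sectors}, $r^\dagger r E_c=r^2E_c=4s(c)^2E_c$, so
\[
\lVert rE_cD\rVert_2^2=\Tr(DE_cr^2E_cD)=4s(c)^2\Tr(E_cD^2)=4s(c)^2R_c .
\]
\item[(ii)] $\lVert DE_{-c}\rVert_2^2=\Tr(E_{-c}D^2)=C_c$.
\end{itemize}
Together these give $\lVert DE_{-c}\rVert_2\lVert rE_cD\rVert_2=2s(c)\sqrt{R_cC_c}$, as claimed. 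The inequality $\lVert AB\rVert_1\le\lVert A\rVert_2\lVert B\rVert_2$ is Hölder's inequality.

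For $c=0$, $r$ maps $E_0$ to itself. Summing over $c\in X$ counts each block once, consistent with $r=\sum_c E_{-c}rE_c$, which holds because the $E_c$ sum to the identity.

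\emph{Other identities to confirm.} The remaining checks are the marginal identities in Eq.~\eqref{eq:row-column}, which follow from $\sum_{c'}E_{-c'}=\1$ and $J^2=D^2$, and the moment identities, which follow from $p=2\sum_c cE_c$. These are routine.
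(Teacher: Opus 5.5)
Your proposal is correct and follows the paper's proof: symmetrize via Lemma~\ref{lem:symmetric-form}, bound the $a_3$-dependent term by $2\lVert DrD\rVert_1$ as in Eq.~\eqref{eq:reduced-I}, identify the $r$-free part exactly with $\sum_{c,c'}d(c,c')\theta_{cc'}$ via Eq.~\eqref{eq:exact-table-term}, and bound the trace norm by Eq.~\eqref{eq:trace-bound}. Your explicit checks of $\lVert rE_cD\rVert_2^2=4s(c)^2R_c$ and $\lVert DE_{-c}\rVert_2^2=C_c$ are the identities the paper leaves implicit in the last equality of Eq.~\eqref{eq:trace-bound}.
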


\begin{proof}
Bring $\mathsf S$ to symmetric form by
Lemma~\ref{lem:symmetric-form} and combine
Eqs.~\eqref{eq:reduced-I}, \eqref{eq:exact-table-term},
and~\eqref{eq:trace-bound}.
\end{proof}

\section{From spectral weights to PV chains}
\label{sec:table-to-pv}

Section~\ref{sec:strategy-table} established
$I_{3322}(\mathsf S)\leq\Phi(\theta)$.  To finish the proof of
Theorem~\ref{thm:variational}, it remains only to show that
$\Phi(\theta)\leq\betaPV$
(Lemma~\ref{lem:finite-sequence-averaging} below).
Recall that a PV chain is the alternating list
$c_0,\lambda_1,c_1,\ldots,\lambda_n,c_n$ of spectral labels and
Schmidt coefficients pictured in Eq.~\eqref{eq:pv-list-picture}, whose value
$\mathcal P_n(c,\lambda)$ is given by Eq.~\eqref{eq:pv-value}.

The construction below is purely combinatorial: it works for any
nonnegative matrix $\theta$ of total weight one, indexed by a finite
subset of $[-1,1]$, whether or not it is the spectral-weight matrix
of a strategy.  We temporarily omit the fixed endpoint labels $1$ and $-1$;
they are restored by zero padding at the end of the proof.  Fix a length
$N$.  We construct a finite collection of PV chains $\gamma$, written
with spectral labels $c_0^\gamma,\ldots,c_N^\gamma$ and unnormalized Schmidt coefficients
$\lambda_1^\gamma,\ldots,\lambda_N^\gamma$.  At every position $i$,
the squared coefficients associated with an ordered pair of labels
$(a,b)$ will sum to the matrix entry $\theta_{ab}$:
\begin{equation}
 \sum_{\substack{\gamma:\,
 (c_{i-1}^\gamma,c_i^\gamma)=(a,b)}}
 (\lambda_i^\gamma)^2=\theta_{ab}
 \qquad\text{for every }a,b.
\label{eq:target-diagonal-total}
\end{equation}
Thus every position contributes
$\sum_{a,b}d(a,b)\theta_{ab}$ after summing over the collection, and
summing Eq.~\eqref{eq:target-diagonal-total} over $a,b$ shows that
$\sum_\gamma(\lambda_i^\gamma)^2=1$ at every position.

Reproducing the matrix entries does not yet reproduce the second term in
$\Phi(\theta)$, which contains products of neighboring Schmidt
coefficients.  Two neighboring coefficients
\[
 a\xleftrightarrow{\ \lambda_i\ }c
 \xleftrightarrow{\ \lambda_{i+1}\ }b,
\]
contribute $s(c)\lambda_i\lambda_{i+1}$.  We shall also arrange that
\begin{equation}
 \sum_{\gamma:\,c_i^\gamma=c}
 \lambda_i^\gamma\lambda_{i+1}^\gamma
 =\sqrt{C_cR_c}
 \qquad\text{for every }c\text{ and }1\leq i<N,
\label{eq:target-neighbor-total}
\end{equation}
the coefficient of $s(c)$ in $\Phi$.  Here
$C_c=\sum_a\theta_{ac}$ and $R_c=\sum_b\theta_{cb}$ are the column
and row marginals.  We call Eqs.~\eqref{eq:target-diagonal-total}
and~\eqref{eq:target-neighbor-total} the \emph{matching conditions}:
position by position, the collection must reproduce the diagonal
weights and the nearest-neighbor couplings that enter
$\Phi(\theta)$.

A chain with $N$ Schmidt coefficients has $N$ diagonal contributions
but only $N-1$ neighboring-coefficient contributions.
Consequently the averaged value of the constructed chains will be
\[
 \Phi_N(\theta):=
 \sum_{a,b}d(a,b)\theta_{ab}
 +\frac{N-1}{N}\sum_cs(c)\sqrt{R_cC_c}.
\]
The finite-size factor tends to one as $N\to\infty$.

\begin{lemma}[Approximation by finite PV chains]
\label{lem:finite-sequence-averaging}
Every nonnegative matrix of total weight one, indexed by a finite subset of
$[-1,1]$, satisfies
\begin{equation}
 \Phi(\theta)\leq\betaPV.
\label{eq:table-pv-bound}
\end{equation}
\end{lemma}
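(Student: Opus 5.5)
The plan is to carry out the construction sketched just before the statement. For each length $N$ I will build an explicit finite family of PV chains that satisfies both matching conditions \eqref{eq:target-diagonal-total} and \eqref{eq:target-neighbor-total}. Then I will show that some member of the family has value at least $\Phi_N(\theta)$, and finally let $N\to\infty$.

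\emph{Construction.} First discard every label $c$ with $R_c=C_c=0$. Index the chains by label paths $\gamma=(c_0,\ldots,c_N)$ drawn from the remaining labels. Each chain gets a product-form (Markov-type) weight: a forward kernel $\theta_{cb}/R_c$, a backward kernel $\theta_{ac}/C_c$, and the exact entry $\theta$ at position $i$. Concretely, set
\begin{equation*}
(\lambda_i^\gamma)^2:=\theta_{c_{i-1}c_i}
\prod_{j<i}\frac{\theta_{c_{j-1}c_j}}{C_{c_j}}
\prod_{j>i}\frac{\theta_{c_{j-1}c_j}}{R_{c_{j-1}}},
\end{equation*}
with the convention that any path containing a zero entry $\theta_{c_{j-1}c_j}=0$ gets all coefficients zero. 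Such paths are dropped, which also takes care of divisions by zero marginals.

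\emph{Diagonal matching.} Fix the pair $(c_{i-1},c_i)=(a,b)$ and sum over the remaining labels. I will do this outward from position $i$. Summing $c_{j}$ for $j>i$ uses $\sum_b\theta_{cb}/R_c=1$, and summing $c_{j-1}$ for $j<i$ uses $\sum_a\theta_{ac}/C_c=1$. The sums telescope and leave exactly $\theta_{ab}$, which is \eqref{eq:target-diagonal-total}.

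\emph{Neighbor matching.} Compare $(\lambda_i^\gamma)^2$ with $(\lambda_{i+1}^\gamma)^2$. They share a common factor $M_\gamma$, which is the product of all $\theta$ entries along the path times the kernel normalizations at positions other than $c_i$. The only difference is that $(\lambda_i^\gamma)^2=M_\gamma/R_{c_i}$, while $(\lambda_{i+1}^\gamma)^2=M_\gamma/C_{c_i}$. Hence $\lambda_i^\gamma\lambda_{i+1}^\gamma=M_\gamma/\sqrt{R_{c_i}C_{c_i}}$. Fix $c_i=c$ and sum over the other labels. The same telescoping gives $\sum M_\gamma=\sum_{a,b}\theta_{ac}\theta_{cb}=C_cR_c$. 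This yields \eqref{eq:target-neighbor-total} with value $\sqrt{C_cR_c}$.

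\emph{Averaging and endpoints.} By the two matching conditions, the summed numerators of \eqref{eq:pv-value} over the family equal $N\Phi_N(\theta)$. The summed denominators equal $N$, since $\sum_\gamma(\lambda_i^\gamma)^2=1$ at every position. A ratio of sums is at most the largest ratio, so some chain $\gamma$ with $\lambda^\gamma\neq0$ satisfies $\mathcal P(\gamma)\geq\Phi_N(\theta)$.

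To restore the required endpoint labels, prepend the label $1$ and append the label $-1$, each attached through a new Schmidt coefficient equal to $0$. Every term involving a zero coefficient vanishes, so the value is unchanged. The padded chain is therefore an admissible competitor in \eqref{eq:beta-pv}, and $\betaPV\geq\Phi_N(\theta)$. Letting $N\to\infty$ gives $\Phi_N(\theta)\to\Phi(\theta)$ and hence \eqref{eq:table-pv-bound}.

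The main obstacle is finding a single coefficient vector per chain that satisfies both matching conditions simultaneously. The difficulty is that $\theta$ is not in general the pair marginal of a stationary Markov chain, because its row and column marginals may differ. The two-sided kernel weighting above is my first attempt at resolving this. The remaining care is only bookkeeping: checking the telescoping sums and handling zero entries and zero marginals.
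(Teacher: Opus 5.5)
Your construction for the case where every row and column marginal is positive is exactly the paper's. It uses the same coefficients as Eq.~\eqref{eq:all-index-coefficients}, the same telescoping check of both matching conditions, the same ratio-of-sums averaging, zero padding, and limit $N\to\infty$.

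\textbf{The gap is your treatment of vanishing marginals.} Discarding only labels with $R_c=C_c=0$ and dropping every path that contains a zero entry avoids division by zero, but it breaks the telescoping you invoke. For a label with $R_c=0<C_c$, the identity $\sum_b\theta_{cb}/R_c=1$ becomes an empty sum, equal to $0$. So when you sum outward from position $i$, all mass routed through such a label is lost; the same happens on the left for $C_c=0<R_c$. Two examples show this:
\begin{itemize}
\item For $\theta=\delta_{ab}$ with $a\neq b$, every path with $N\geq2$ contains a zero entry. Your family is then empty, and $\sum_\gamma(\lambda_i^\gamma)^2=0$ instead of $1$.
\item For $\theta_{xx}=\alpha$, $\theta_{xy}=\beta$ and all other entries zero, the pair $(x,y)$ receives no weight at any position $i<N$. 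The pair $(x,x)$ receives $\alpha\bigl(\alpha/(\alpha+\beta)\bigr)^{N-i-1}$ rather than $\alpha$.
\end{itemize}
So for such matrices neither matching condition holds, and neither does the identity that the averaged value equals $\Phi_N(\theta)$.

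This case is not marginal. Spectral-weight matrices routinely have labels with exactly one vanishing marginal, for instance a label $c$ whose paired projector $E_{-c}$ is zero. Also, Step~1 of Lemma~\ref{lem:equality-chain} applies the present lemma to matrices $(1-t)\theta+t\delta_{ab}$ that may have such labels.

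\textbf{Two ways to repair it.}
\begin{itemize}
\item \emph{The paper's route.} Prove the bound only when all marginals are positive. Then apply it to $\theta^\varepsilon=(1-\varepsilon)\theta+\varepsilon\mu$ with $\mu$ uniform, whose marginals are all positive, and let $\varepsilon\downarrow0$. This uses continuity of $\Phi$: its only nonpolynomial terms, $\sqrt{R_cC_c}$, are continuous.
\item \emph{An exact construction.} Instead of dropping paths, replace each undefined kernel row by an arbitrary probability vector: the row $\theta_{c\,\cdot}/R_c$ when $R_c=0$, and the column $\theta_{\cdot\,c}/C_c$ when $C_c=0$. Both telescoping sums are then exact. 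The neighbor identity survives because $R_c=0$ forces $\theta_{cb}=0$, so $\lambda_{i+1}^\gamma=0$ whenever $c_i^\gamma=c$. Likewise $C_c=0$ forces $\theta_{ac}=0$, so $\lambda_i^\gamma=0$. Both match $\sqrt{R_cC_c}=0$.
\end{itemize}
With either repair, the rest of your argument goes through unchanged.
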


\begin{proof}
Assume first that every marginal is positive, fix $N$, and let
$\mathcal X$ be the finite label set.  For every sequence
$\gamma=(c_0^\gamma,\ldots,c_N^\gamma)\in\mathcal X^{N+1}$, define
the squared coefficient at position $i$, $1\leq i\leq N$, by
\begin{align}
 (\lambda_i^\gamma)^2
 :={}&\theta_{c_{i-1}^\gamma c_i^\gamma}
 \prod_{j=1}^{i-1}
 \frac{\theta_{c_{j-1}^\gamma c_j^\gamma}}{C_{c_j^\gamma}}
 \prod_{j=i+1}^{N}
 \frac{\theta_{c_{j-1}^\gamma c_j^\gamma}}{R_{c_{j-1}^\gamma}}.
\label{eq:all-index-coefficients}
\end{align}

\begin{figure}[t]
\centering
\begin{tikzpicture}[
  v3 every figure,
  x=1cm,
  y=1cm,
  parameter node/.style={
    circle,
    draw=black!55,
    thick,
    fill=white,
    minimum size=5.0mm,
    inner sep=1pt
  },
  cell/.style={
    draw=black!45,
    minimum width=9.5mm,
    minimum height=6.5mm,
    inner sep=1pt
  }
]
  \node[v3 panel] at (0.05,0.12) {(a) One chain};
  \node[parameter node] (c0) at (0.45,-0.48) {$a$};
  \node[parameter node] (c1) at (2.05,-0.48) {$a$};
  \node[parameter node] (c2) at (3.65,-0.48) {$b$};
  \node[parameter node] (c3) at (5.25,-0.48) {$b$};
  \node[parameter node] (c4) at (6.85,-0.48) {$a$};
  \draw[vthreeexact, line width=1.5pt] (c0) -- (c1)
    node[midway,above,font=\scriptsize,text=vthreeexact] {$\lambda_1$};
  \draw[vthreepath, line width=1.5pt] (c1) -- (c2)
    node[midway,above,font=\scriptsize,text=vthreepath] {$\lambda_2$};
  \draw[vthreegreen, line width=1.5pt] (c2) -- (c3)
    node[midway,above,font=\scriptsize,text=vthreegreen] {$\lambda_3$};
  \draw[vthreerelaxed, line width=1.5pt] (c3) -- (c4)
    node[midway,above,font=\scriptsize,text=vthreerelaxed] {$\lambda_4$};
  \node[v3 annotation,text=vthreeexact] at (1.25,-0.91) {$(a,a)$};
  \node[v3 annotation,text=vthreepath] at (2.85,-0.91) {$(a,b)$};
  \node[v3 annotation,text=vthreegreen] at (4.45,-0.91) {$(b,b)$};
  \node[v3 annotation,text=vthreerelaxed] at (6.05,-0.91) {$(b,a)$};

  \node[v3 panel] at (0.05,-1.34) {(b) Matrix entries};
  \node[v3 annotation] at (1.04,-1.65) {$a$};
  \node[v3 annotation] at (2.29,-1.65) {$b$};
  \node[v3 annotation,anchor=east] at (0.37,-2.04) {$a$};
  \node[v3 annotation,anchor=east] at (0.37,-2.69) {$b$};
  \node[cell,minimum width=12.5mm,fill=vthreeexact!16]
    at (1.04,-2.04) {$\lambda_1^2$};
  \node[cell,minimum width=12.5mm,fill=vthreepath!14]
    at (2.29,-2.04) {$\lambda_2^2$};
  \node[cell,minimum width=12.5mm,fill=vthreerelaxed!22]
    at (1.04,-2.69) {$\lambda_4^2$};
  \node[cell,minimum width=12.5mm,fill=vthreegreen!14]
    at (2.29,-2.69) {$\lambda_3^2$};

  \node[v3 panel] at (3.35,-1.34) {(c) The collection};
  \node[v3 annotation,text width=4.2cm] at (5.45,-1.88)
    {summing the chains with $(c_{i-1}^\gamma,c_i^\gamma)=(a,b)$,};
  \node[font=\small] at (5.45,-2.50)
    {$\textstyle\sum_\gamma(\lambda_i^\gamma)^2=\theta_{ab}$};
  \node[v3 annotation,text width=4.2cm] at (5.45,-3.02)
    {at every position $i$, the matrix is reproduced exactly};
\end{tikzpicture}
 \caption{\label{fig:table-pv-correspondence}
How a collection of PV chains reproduces the matrix.  Panels (a) and (b) show this
two-label example,
for a chain $\gamma$, $(\lambda_i^\gamma)^2$ is assigned to the matrix entry
indexed by $(c_{i-1}^\gamma,c_i^\gamma)$; colors match positions and entries.  (c) Summed over the
collection, the squared coefficients at each position reproduce every entry of
$\theta$ exactly [Eq.~\eqref{eq:target-diagonal-total}].  Only
neighboring-coefficient
contributions carry $(N-1)/N$, since $N$ coefficients have $N-1$ neighboring
pairs.}
\end{figure}

Here $\lambda_i^\gamma$ is the nonnegative square root, and an empty
product equals one.  Direct summation of
Eq.~\eqref{eq:all-index-coefficients} gives both matching conditions,
Eqs.~\eqref{eq:target-diagonal-total}
and~\eqref{eq:target-neighbor-total}; the calculation is given in
Appendix~\ref{app:matching-identities}.  Thus, after discarding chains
whose coefficients all vanish, the normalized coefficients
$\lambda_i^\gamma/\sqrt{\sum_j(\lambda_j^\gamma)^2}$ define finite
states, which become PV strategies after the endpoint padding below.

For each remaining chain $\gamma$, let $\nu_\gamma$ and $S_\gamma$ be
the numerator and denominator of
Eq.~\eqref{eq:pv-value} evaluated on $\gamma$, so that its PV value is
$\nu_\gamma/S_\gamma$.  Summing the matching conditions,
Eqs.~\eqref{eq:target-diagonal-total}
and~\eqref{eq:target-neighbor-total},
 over all positions and adjacent pairs gives
\begin{align*}
 \sum_\gamma \nu_\gamma
 &=N\sum_{a,b}d(a,b)\theta_{ab}\\
 &\quad +(N-1)\sum_cs(c)\sqrt{R_cC_c},\\
 \sum_\gamma S_\gamma&=N.
\end{align*}
To meet the endpoint convention in Eq.~\eqref{eq:beta-pv}, replace the
spectral labels and unnormalized Schmidt coefficients of each chain by
 \[
  (1,c_0,\ldots,c_N,-1),
  \qquad
  (0,\lambda_1^\gamma,\ldots,\lambda_N^\gamma,0),
\]
respectively; all added terms vanish, and each padded chain is admissible in
the variational family defined by Eq.~\eqref{eq:beta-pv}.  Therefore
\begin{equation}
 \Phi_N(\theta)
 =\frac{\sum_\gamma \nu_\gamma}{\sum_\gamma S_\gamma}
 \leq\max_\gamma\frac{\nu_\gamma}{S_\gamma}
 \leq\betaPV.
\label{eq:collection-pv-bound}
\end{equation}
The ratio is the weighted average of the chains' PV values, with weights
$S_\gamma/\sum_{\gamma'}S_{\gamma'}$, and is therefore at most their
maximum.
Letting $N\to\infty$ removes the factor $(N-1)/N$ and gives
$\Phi(\theta)\leq\betaPV$.

If some marginal vanishes, let $\mu$ be the uniform matrix on the same
finite label set and put
$\theta^\varepsilon=(1-\varepsilon)\theta+\varepsilon\mu$.
Every marginal of $\theta^\varepsilon$ is positive, so the preceding
argument gives $\Phi(\theta^\varepsilon)\leq\betaPV$ for every
$0<\varepsilon\leq1$.  Continuity of $\Phi$ then gives
$\Phi(\theta)\leq\betaPV$ as $\varepsilon\downarrow0$.
\end{proof}

\begin{proof}[Proof of Theorem~\ref{thm:variational}]
Every finite-dimensional strategy produces a finite
spectral-weight matrix
satisfying
\[
 I_{3322}(\mathsf S)\leq\Phi(\theta)\leq\betaPV
\]
(Lemmas~\ref{lem:table-bound} and~\ref{lem:finite-sequence-averaging}).
Hence $\Istar\leq\betaPV$.  Conversely, the realization following
Eq.~\eqref{eq:pv-value} shows that every finite PV chain is a
finite-dimensional quantum strategy, so $\Istar\geq\betaPV$.  Thus
$\Istar=\betaPV$.
\end{proof}

\section{Why finite-dimensional strategies cannot attain the supremum}
\label{sec:strictness}

It remains to show that no finite-dimensional strategy reaches
$\Istar=\betaPV$.  Section~\ref{sec:table-to-pv} constructed, for every
finite spectral-weight matrix $\theta$ and every length $N$, a finite collection of
PV chains whose weighted average has value $\Phi_N(\theta)$, with
$\Phi_N(\theta)\to\Phi(\theta)$ as $N\to\infty$.  Hence, if
$\Phi(\theta)=\betaPV$, some finite PV chains have values approaching
$\betaPV$.

This does not produce a single PV chain attaining $\betaPV$: the
chain selected from the collection may change with $N$, and its
dimension grows.  Moreover, proving only that no finite PV chain
attains $\betaPV$ would not exclude a different finite-dimensional
strategy from attaining the same value.  Equality of suprema does not
imply equality of maximizers.

Assume, for contradiction, that a finite-dimensional strategy attains
the supremum.  Its finite spectral-weight matrix---the joint
distribution of the two parties' spectral labels,
Eq.~\eqref{eq:theta}---then satisfies
\[
 \betaPV=I_{3322}(\mathsf S)\leq\Phi(\theta)\leq\betaPV,
\]
and hence $\Phi(\theta)=\betaPV$.  We now use this equality for the
finite matrix.  Combining maximality of $\Phi$ with the identity that
expresses it as a $\theta$-weighted average shows that every positive
entry satisfies an exact equality.  Lemma~\ref{lem:equality-chain} then
selects one possible sequence of positive entries whose indices match:
\[
 \theta_{c_0c_1}>0,\qquad
 \theta_{c_1c_2}>0,\qquad\ldots .
\]
This list need not contain every positive entry and is not a
representation of the whole matrix, or of the original quantum
strategy, as one PV chain.  From the row and column sums of the full
matrix, the lemma defines new auxiliary coefficients $u_i$ along this
selected list.  They play exactly the role of the $\lambda_i$ in the
PV expression, but the different symbol emphasizes that they are newly
constructed from $\theta$, rather than Schmidt coefficients of the
original state.  The finite matrix initially forces
$\sum_i u_i^2<\infty$, so the $u_i$ can be normalized and used as
Schmidt coefficients in the PV expression.  Lemma~\ref{lem:no-equality-chain}
then shows that the remaining optimality argument instead forces them
to grow geometrically.  This incompatibility is the contradiction.

We shall use the coarse bounds
\begin{equation}
 \frac14<\betaPV<\frac13,
\label{eq:coarse-bounds}
\end{equation}
proved in Appendix~\ref{app:technical-cases}.  The lower bound guarantees
that a matrix satisfying $\Phi(\theta)=\betaPV$ has a positive
off-diagonal entry;
the upper bound is used only for the exceptional endpoint $c=\pm1$.

\begin{lemma}[Consequences of equality for a finite spectral-weight matrix]
\label{lem:equality-chain}
Suppose a finite spectral-weight matrix satisfies
$\Phi(\theta)=\betaPV$.  Then there is a sequence of spectral labels
$(c_i)$, drawn
from the matrix's row and column index set, and positive auxiliary
coefficients $(u_i)$, indexed by a set $I\subseteq\mathbb Z$ that is
finite, one-sided, or all of $\mathbb Z$.  The sequence is obtained by
selecting positive entries $\theta_{c_{i-1}c_i}>0$; this selection need not
be unique.  The auxiliary coefficients $u_i$, defined from
the row and column sums of the whole matrix, satisfy
\begin{equation}
 0<\sum_{i\in I}u_i^2<\infty,
\label{eq:equality-l2}
\end{equation}
and satisfy the recurrence
\begin{equation}
 \betaPV u_i
 =d(c_{i-1},c_i)u_i
 +\frac{s(c_{i-1})}{2}u_{i-1}
 +\frac{s(c_i)}{2}u_{i+1}
\label{eq:pv-equality-recurrence}
\end{equation}
at every position (a coefficient beyond a finite endpoint is set
to zero), and any infinite end has an eventually constant label in
$(-1,1)$ and an eventually constant coefficient ratio; on an infinite
left end,
\begin{equation}
 c_i=c_-,
 \qquad
 \frac{u_{i+1}}{u_i}=\kappa>1
\label{eq:eventual-left-constancy}
\end{equation}
for all sufficiently negative $i$.  On an infinite right end there are
$c_+\in(-1,1)$ and $0<\kappa_+<1$ such that
\[
 c_i=c_+,
 \qquad
 \frac{u_{i+1}}{u_i}=\kappa_+
\]
for all sufficiently positive $i$.
\end{lemma}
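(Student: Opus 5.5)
The plan is to read off first-order optimality conditions from the fact that $\theta$ maximizes $\Phi$, and then integrate those conditions along a path of positive entries to obtain the auxiliary coefficients $u_i$.

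\emph{Step 1: pointwise optimality conditions.} Extend $\Phi$ to all nonnegative matrices on the label set $\mathcal X$ without the normalization. Both sums in Eq.~\eqref{eq:Phi} are then positively homogeneous of degree one. Lemma~\ref{lem:finite-sequence-averaging} therefore gives
\[
 \Psi(\theta'):=\Phi(\theta')-\betaPV\sum_{a,b}\theta'_{ab}\leq0
 \qquad\text{for all }\theta'\geq0,
\]
and the hypothesis says $\Psi(\theta)=0$, so $\theta$ is a global maximizer of $\Psi$.

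Consider first labels with $R_c,C_c>0$, and put $\rho_c:=\sqrt{C_c/R_c}$. Differentiating $\Psi$ in $\theta_{ab}$ gives
\begin{equation*}
 d(a,b)+\tfrac{s(a)}{2}\rho_a+\tfrac{s(b)}{2}\rho_b^{-1}\leq\betaPV ,
\end{equation*}
with equality whenever $\theta_{ab}>0$, since that coordinate is interior.

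If instead $R_a=0<C_a$ with $s(a)>0$, add $\varepsilon$ to some entry $\theta_{ab}$. The gain $s(a)\sqrt{\varepsilon C_a}$ beats the $O(\varepsilon)$ loss, contradicting maximality. Hence a vanishing row or column marginal on a label that carries weight forces that label to be $\pm1$. These are exactly the places where a chain can terminate. At such an end I would check, using $\betaPV<1/3$ from Eq.~\eqref{eq:coarse-bounds}, that the one-sided derivative condition is the recurrence with the outer coefficient set to zero.

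\emph{Step 2: building the chain and the $u_i$.} Dividing Eq.~\eqref{eq:pv-equality-recurrence} by $u_i$ shows that it coincides with the equality above exactly when
\[
 u_{i-1}/u_i=\rho_{c_{i-1}},\qquad u_{i+1}/u_i=\rho_{c_i}^{-1},
\]
that is, $u_{i+1}=u_i\sqrt{R_{c_i}/C_{c_i}}$. These two requirements are mutually consistent. So, once a path of positive entries $\theta_{c_{i-1}c_i}>0$ has been selected, the $u_i$ are determined up to a scale by the marginals of the whole matrix, and the recurrence then holds automatically.

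To start the path I use $\betaPV>1/4$. If every positive entry were diagonal, then $\theta$ would be a mixture of constant chains, giving $\Phi(\theta)\leq1/4$. Hence some off-diagonal entry is positive, and I start the path there.

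The path is extended to the right while $R_{c_i}>0$, and to the left while $C_{c_{i-1}}>0$. It stops exactly at the endpoint labels $\pm1$ identified in Step~1.

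\emph{Step 3: square-summability and eventual constancy (main obstacle).} Since $\mathcal X$ is finite, an infinite end eventually runs around cycles in the support graph of $\theta$. Along a cycle, the coefficients are multiplied by $\prod_c\sqrt{R_c/C_c}$ over the cycle's labels. The difficulty is to choose the extension so that this factor is $<1$ going right and $>1$ going left, and in fact so that the end becomes a single self-loop $c_\pm\to c_\pm$.

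For this I would use the Markov kernel $P(a\to b)=\theta_{ab}/R_a$, whose "stationarity defect" is measured by the ratios $C_b/R_b$. The identity $\sum_cR_c=\sum_cC_c$ guarantees labels with $\rho_c<1$ and labels with $\rho_c>1$. A greedy choice (at each step, move to a successor minimizing $\rho$) should then reach a label carrying a positive diagonal entry with $\rho<1$. For such a label the self-loop equality reads
\[
 \betaPV=d(c,c)+\tfrac{s(c)}{2}\bigl(\rho_c+\rho_c^{-1}\bigr).
\]
This has a root $\kappa_+=\rho_c^{-1}$ and forces $c\in(-1,1)$, because $d(\pm1,\pm1)=0$ and $s(\pm1)=0$ would give $\betaPV=0$.

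The step I am least sure of is excluding longer cycles that are not self-loops. Here I would try a convexity argument on $\Psi$ restricted to the cycle's entries: perturb weight around the cycle and show that the second-order condition forces the cycle to collapse to one label.

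Once the eventual ratios are $\kappa_+<1$ on the right and $\kappa>1$ on the left, the series $\sum_iu_i^2$ converges geometrically at both ends. The sum is positive because each $u_i>0$, which gives Eq.~\eqref{eq:equality-l2}.
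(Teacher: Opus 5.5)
Your Steps 1 and 2 are sound and close to the paper. Extending $\Phi$ homogeneously gives a two-sided derivative at every positive entry, which is a clean substitute for the paper's weighted-average identity $\sum_{\theta_{ab}>0}\theta_{ab}h(a,b)=\Phi(\theta)$. Your remark that a label with exactly one vanishing marginal must be $\pm1$ is also correct, though not needed here. You do not need $\betaPV<1/3$ for the endpoint form of the recurrence.

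The genuine gap is Step~3. You give no argument that an infinite end is eventually a single self-loop, and the heuristic you sketch points the wrong way. Since $u_{i+1}/u_i=\rho_{c_i}^{-1}=\sqrt{R_{c_i}/C_{c_i}}$, a right-end self-loop at a label with $\rho_c<1$ gives $\kappa_+=\rho_c^{-1}>1$. That is geometric growth, whereas decay to the right requires $C_c>R_c$. Likewise, extending ``while $R_{c_i}>0$'' can switch to a self-loop too early, at a label with $R_c>C_c$. Neither the greedy Markov-kernel choice nor the second-order argument for collapsing longer cycles is substantiated.

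The missing idea is that your own first-order conditions already make the support of $\theta$ monotone, which rules out every cycle except self-loops. The derivative is separable: $h(a,b)=d(a,b)+f(a)+g(b)$, with $f(a)=\tfrac{s(a)}2\sqrt{C_a/R_a}$ and $g(b)=\tfrac{s(b)}2\sqrt{R_b/C_b}$, where a zero-numerator term is read as zero.

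\begin{itemize}
\item Take $\theta_{ab}>0$ and $\theta_{a'b'}>0$. Apply $h\leq\betaPV$ at the crossed pairs $(a,b')$ and $(a',b)$, which is admissible because $R_a,R_{a'},C_b,C_{b'}>0$. Subtract the equalities at $(a,b)$ and $(a',b')$.
\item The $f$ and $g$ terms cancel, leaving $d(a,b')+d(a',b)\leq d(a,b)+d(a',b')$, i.e.\ $(a-a')(b'-b)\leq0$.
\item Hence any path of positive entries with distinct successive labels is strictly monotone, and therefore finite.
\end{itemize}

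Now start from a positive off-diagonal entry and extend the path maximally through off-diagonal entries. By maximality and monotonicity, the only possible positive entry in row $c_L$ is $(c_L,c_L)$. So $R_{c_L}=\theta_{c_Lc_L}<C_{c_L}$, with strictness coming from $\theta_{c_{L-1}c_L}>0$.

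\begin{itemize}
\item If $\theta_{c_Lc_L}=0$, you get a finite end with $R_{c_L}=0$.
\item Otherwise you repeat $c_L$ forever, with ratio $\kappa_+=\sqrt{R_{c_L}/C_{c_L}}<1$.
\item Symmetrically, $C_{c_0}=\theta_{c_0c_0}<R_{c_0}$ gives $\kappa>1$ on the left.
\end{itemize}

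This replaces your Step~3 entirely. It gives eventual constancy, the correct ratio directions, and Eq.~\eqref{eq:equality-l2} at once.
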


\begin{proof}
Suppose a finite spectral-weight matrix $\theta$ satisfies
$\Phi(\theta)=\betaPV$.

Because this matrix attains the largest possible value of $\Phi$, its
one-sided variations give the first-order condition
Eq.~\eqref{eq:directional-bound}.  The first two steps both use this
condition.  Step~1 combines it with an exact weighted-average identity
to obtain equality at every positive entry.  Step~2 applies the same
condition to two crossed pairs to order the positive entries.

\paragraph*{Step 1: the first-order condition and equality at positive entries.}
For an ordered pair $(a,b)$ with $R_a>0$ and $C_b>0$, consider
\[
 \theta_t=(1-t)\theta+t\delta_{ab},
\]
where $\delta_{ab}$ has all its weight at $(a,b)$.  Define
\begin{equation}
 h(a,b):=d(a,b)
 +\frac{s(a)}2\sqrt{\frac{C_a}{R_a}}
 +\frac{s(b)}2\sqrt{\frac{R_b}{C_b}}.
\label{eq:directional-value}
\end{equation}
 It depends on the
row and column sums of the whole matrix and is introduced because
\begin{equation}
 \left.\frac{d}{dt}\Phi(\theta_t)\right|_{0^+}
 =h(a,b)-\betaPV\leq0,
 \qquad h(a,b)\leq\betaPV.
\label{eq:directional-bound}
\end{equation}
The inequality holds because every such matrix satisfies the bound in
Lemma~\ref{lem:finite-sequence-averaging}, while $\Phi(\theta)$
already equals $\betaPV$.  A term with zero numerator is interpreted
as zero.  This also gives the correct one-sided derivative when
$C_a=0$ or $R_b=0$, since the corresponding square-root
contribution remains zero.  (Such a zero marginal forces $a\neq b$
here because $R_a$ and $C_b$ are positive.)

The exact identity
\begin{align*}
 \sum_{\theta_{ab}>0}\theta_{ab}h(a,b)
 &=\sum_{a,b}d(a,b)\theta_{ab}
   +\frac12\sum_a s(a)\sqrt{R_aC_a}\\
 &\quad+\frac12\sum_b s(b)\sqrt{R_bC_b}\\
 &=\Phi(\theta)
\end{align*}
follows from the definitions of the row and column sums.  Since
$\sum_{a,b}\theta_{ab}=1$, it shows explicitly that $\Phi(\theta)$ is
the $\theta$-weighted average of the numbers $h(a,b)$.
Equation~\eqref{eq:directional-bound}
bounds each of these numbers by $\betaPV$, while their weighted
average equals $\betaPV$.  Hence every entry with positive weight
must satisfy
\begin{equation}
 \theta_{ab}>0
 \quad\Longrightarrow\quad
 h(a,b)=\betaPV.
\label{eq:positive-entry-equality}
\end{equation}

\paragraph*{Step 2: the same condition orders the positive entries.}
If $\theta_{ab}>0$, $\theta_{a'b'}>0$, and
$a<a'$, then $b\leq b'$.  Indeed, apply
Eq.~\eqref{eq:directional-bound} to the crossed pairs $(a,b')$ and
$(a',b)$, and subtract the two instances of
Eq.~\eqref{eq:positive-entry-equality}.  The square-root terms cancel,
leaving
\[
 (a-a')(b'-b)\leq0,
\]
and therefore $b\leq b'$.

Now suppose
\[
 \theta_{c_0c_1}>0,\qquad
 \theta_{c_1c_2}>0,\qquad\ldots,
\]
with consecutive $c_i$'s distinct.  If $c_0<c_1$, the ordering gives
$c_1<c_2$, then $c_2<c_3$, and so on.  If $c_0>c_1$, all the
inequalities are reversed.  Thus the $c_i$'s are strictly monotone, so
they never repeat and such a sequence must be finite.

At least one positive entry is off the diagonal.  Otherwise
$R_c=C_c=\theta_{cc}$, and
\[
 \Phi(\theta)
 =\sum_c\theta_{cc}[d(c,c)+s(c)]
 =\sum_c\theta_{cc}[s(c)-s(c)^2]
 \leq\frac14,
\]
contrary to Eq.~\eqref{eq:coarse-bounds}.

\paragraph*{Step 3: select matching positive entries.}
Start from a positive off-diagonal entry and then select positive
entries whose indices match:
\[
 \theta_{c_0c_1}>0,\qquad
 \theta_{c_1c_2}>0,\qquad\ldots,\qquad
 \theta_{c_{L-1}c_L}>0.
\]
Thus the second index of each selected entry is the first index of the
next one.  If several continuations are available, choose any one and
extend it maximally in both directions
through entries with distinct successive labels.  By Step~2
the distinct labels are strictly monotone.  Because the matrix has
only finitely many indices, the sequence $c_0,\ldots,c_L$ is finite.
It need not contain every positive entry and need not be unique.
Entries not selected remain part of $\theta$, and hence still affect
the row and column sums used below.

By maximality and the ordering proved in Step~2, the only possible
positive entry with second coordinate $c_0$ is $(c_0,c_0)$---any
other such entry would extend the sequence one step further left.  Hence
\[
 C_{c_0}=\theta_{c_0c_0}<R_{c_0},
\]
where strictness follows from $\theta_{c_0c_1}>0$.  Similarly, the
only possible positive entry with first coordinate $c_L$ is
$(c_L,c_L)$, so
\[
 R_{c_L}=\theta_{c_Lc_L}<C_{c_L}.
\]
If $\theta_{c_0c_0}>0$, extend the sequence by $c_0$ indefinitely
to the left; otherwise retain a finite endpoint.  Treat $c_L$
analogously on the right.  Every consecutive pair in the resulting
finite or infinite sequence is a positive entry of $\theta$.  Schematically,
\[
 \ldots,c_0,c_0,c_0,c_1,\ldots,c_{L-1},c_L,c_L,c_L,\ldots,
\]
where either constant end is absent when the sequence has a finite
endpoint.  An infinite end means only that the same positive diagonal
entry $(c_0,c_0)$, or $(c_L,c_L)$, is used repeatedly; it does not
turn the original finite-dimensional strategy into an
infinite-dimensional one.

\paragraph*{Step 4: define auxiliary coefficients from the full marginals.}
We now introduce the auxiliary numbers $u_i$.  Fix one of them to an
arbitrary positive value and define all the others recursively: whenever
positions $i$ and $i+1$ both belong to the sequence, set
\begin{equation}
 \frac{u_{i+1}}{u_i}
 =\sqrt{\frac{R_{c_i}}{C_{c_i}}}.
\label{eq:route-auxiliary-coefficients}
\end{equation}
Thus $u_i$ is associated with the selected entry
$(c_{i-1},c_i)$, as pictured by
\[
 c_0\xleftrightarrow{\ u_1\ }c_1
 \xleftrightarrow{\ u_2\ }\cdots
 \xleftrightarrow{\ u_L\ }c_L.
\]
For the selected entry $(c_{i-1},c_i)$, this definition gives
\[
 \sqrt{\frac{C_{c_{i-1}}}{R_{c_{i-1}}}}
 =\frac{u_{i-1}}{u_i},
 \qquad
 \sqrt{\frac{R_{c_i}}{C_{c_i}}}
 =\frac{u_{i+1}}{u_i}.
\]
Consequently Eq.~\eqref{eq:positive-entry-equality} becomes
\[
 \betaPV
 =d(c_{i-1},c_i)
 +\frac{s(c_{i-1})}{2}\frac{u_{i-1}}{u_i}
 +\frac{s(c_i)}{2}\frac{u_{i+1}}{u_i}.
\]
Multiplying by $u_i$ gives
Eq.~\eqref{eq:pv-equality-recurrence}.  Thus the $u_i$ satisfy exactly the
equation that the Schmidt coefficients of a PV strategy at value $\betaPV$
would have to satisfy.  On a constant left end the
ratio is
$\kappa=\sqrt{R_{c_0}/C_{c_0}}>1$, so the auxiliary coefficients decay
geometrically as $i\to-\infty$; on a constant right end the ratio is
smaller than one, so they decay as $i\to+\infty$.  Only finitely many
terms lie between the ends, proving Eq.~\eqref{eq:equality-l2}.  At a
finite left endpoint $C_{c_0}=0$, and at a finite right endpoint
$R_{c_L}=0$; the corresponding square-root term of $h$ vanishes,
consistent with setting the missing coefficient to zero in the
recurrence.

Finally, an eventually constant end label cannot equal $\pm1$:
Eq.~\eqref{eq:positive-entry-equality} for the corresponding diagonal
entry would give $h=0$, whereas $\betaPV>1/4$.  This establishes
Eq.~\eqref{eq:eventual-left-constancy} and its right-end analogue.
\end{proof}

We now evaluate the PV expression on the spectral labels and auxiliary
coefficients supplied by Lemma~\ref{lem:equality-chain}:
\begin{equation}
 \mathcal P_I(c,u):=
 \frac{\displaystyle
  \sum_{i\in I}d(c_{i-1},c_i)u_i^2
  +\sum_{\substack{i\in I\\i+1\in I}}
     s(c_i)u_i u_{i+1}}
 {\displaystyle\sum_{i\in I}u_i^2}.
\label{eq:infinite-pv-value}
\end{equation}
This step concerns only the auxiliary numbers; it does not rewrite the
original state and measurements as a PV strategy.  When $I$ is finite,
adding zero coefficients at the endpoints and normalizing the $u_i$
makes them the Schmidt coefficients of a genuine finite PV strategy,
whose value is the expression above.  When $I$ is infinite, we use only
this convergent numerical expression and its finite truncations; no
infinite-dimensional realization is assumed.  The sums converge absolutely
because $d$ is
bounded and
$2u_i u_{i+1}\leq u_i^2+u_{i+1}^2$.
Multiplying Eq.~\eqref{eq:pv-equality-recurrence} by $u_i$ and
summing over $i$ gives
\begin{equation}
 \betaPV\sum_{i\in I}u_i^2
 =\sum_{i\in I}d(c_{i-1},c_i)u_i^2
 +\sum_{\substack{i\in I\\i+1\in I}}s(c_i)u_i u_{i+1}.
\end{equation}
Indeed, each product $u_i u_{i+1}$ occurs twice, once with coefficient
$s(c_i)/2$ from each adjacent recurrence.  Hence
$\mathcal P_I(c,u)=\betaPV$.

\begin{lemma}[The equality conditions are inconsistent]
\label{lem:no-equality-chain}
No sequence has all the properties listed in
Lemma~\ref{lem:equality-chain}.
\end{lemma}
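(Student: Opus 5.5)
The plan is to turn the data of Lemma~\ref{lem:equality-chain} into a maximizer of the PV functional, derive stationarity conditions in the spectral labels, and show that on any end they force the wrong decay direction. Since $\mathcal P_I(c,u)=\betaPV$ and every finite truncation of $(c,u)$, padded by $c=1$ on the left and $c=-1$ on the right with zero coefficients, is an admissible finite PV chain, the pair $(c,u)$ maximizes $\mathcal P_I$ over all perturbations that change only finitely many entries. The truncation values converge to $\mathcal P_I$ because $\sum u_i^2<\infty$. In particular $(c,u)$ is stationary under compactly supported variations of the labels $c_i$ and of the coefficients $u_i$. Stationarity in the $u_i$ is exactly the recurrence~\eqref{eq:pv-equality-recurrence}. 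Stationarity in an interior label $c_i\in(-1,1)$ gives, using $\partial_a d=b+\tfrac12$, $\partial_b d=a-\tfrac12$ and $s'(c)=-c/s(c)$,
\begin{equation*}
 \Bigl(c_{i-1}-\tfrac12\Bigr)u_i^2+\Bigl(c_{i+1}+\tfrac12\Bigr)u_{i+1}^2-\frac{c_i}{s(c_i)}\,u_iu_{i+1}=0 .
\end{equation*}
This new condition is what the proof of Lemma~\ref{lem:equality-chain} did not use.

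\emph{Finite ends.} First I would exclude finite endpoints. Suppose the chain ends at $c_L$ with the padding label $-1$ beyond it. I would append one new coefficient $\varepsilon$ with a freely chosen new label $c'$. The numerator then changes by $s(c_L)u_L\varepsilon+O(\varepsilon^2)$, while the denominator changes only at order $\varepsilon^2$. This strictly increases the value whenever $c_L\neq\pm1$, contradicting maximality. If instead $c_L=\pm1$, I would use the endpoint relations $R_{c_L}=0$ together with $h(c_{L-1},c_L)=\betaPV$ and the bound $\betaPV<1/3$ from Eq.~\eqref{eq:coarse-bounds} to rule it out, which is the role the paper assigns to that bound. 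Both ends are treated symmetrically, so after this step both ends are infinite and eventually constant, with labels $c_\pm\in(-1,1)$ and ratios $\kappa>1$ and $\kappa_+<1$.

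\emph{Constant ends.} On a constant left end, write $c=c_-$ and $s=s(c)$. The recurrence reads
\begin{equation*}
 \betaPV=d(c,c)+\tfrac{s}{2}\bigl(\kappa+\kappa^{-1}\bigr),
\end{equation*}
and the label-stationarity condition becomes
\begin{equation*}
 \Bigl(c-\tfrac12\Bigr)+\Bigl(c+\tfrac12\Bigr)\kappa^2=\frac{c}{s}\,\kappa .
\end{equation*}
The main step, and the one I expect to be the obstacle, is solving this pair with $c\in(-1,1)$ and $\tfrac14<\betaPV<\tfrac13$. I would first use the quadratic to express $\kappa$ in terms of $c$. I would then show that the root compatible with the recurrence satisfies $\kappa\le1$, so that the coefficients grow rather than decay as $i\to-\infty$. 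This contradicts $\kappa>1$ in Eq.~\eqref{eq:eventual-left-constancy} and, with it, $\sum u_i^2<\infty$. The mirror computation handles the right end, with $\kappa_+\geq1$ forced against $\kappa_+<1$.

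If a direct sign analysis is messy, the first fallback is to use the exact symmetry of $d$ and $s$ under $c\mapsto-c$ combined with reversing the chain, which reduces the right end to the left end. The second fallback is to substitute $\kappa=e^{t}$ and check that the combination $(c+\tfrac12)e^{t}+(c-\tfrac12)e^{-t}-c/s$ has a fixed sign for $t>0$ whenever the recurrence holds. Together with the exclusion of finite ends, this would leave no admissible sequence, proving the lemma.
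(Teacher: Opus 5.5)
\textbf{There is a genuine gap: the constant-end step rests on a claim that is false.}

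Your first two parts match Steps~1--3 of the paper: maximality under finitely supported changes, the label-stationarity equation, and the appending move at a finite end. For $c_L=\pm1$, though, you still need an explicit perturbation. The paper moves $c_L$ to $\sigma(1-\delta)$ and appends a coefficient of order $\sqrt\delta$. For the normalized sequence this gains $\delta u_L^2[\sigma(\tfrac12-c_{L-1})+1/(2\betaPV)]>0$, which is positive because $\betaPV<1/3$. The relations $R_{c_L}=0$ and $h=\betaPV$ alone do not supply this.

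Now the constant end. Since $d(c,c)=-s^2$, the recurrence reads $\kappa+\kappa^{-1}=2(\betaPV+s^2)/s$. This exceeds $2$ for every $c\in(-1,1)$, because $\betaPV>\tfrac14\geq s-s^2$. So a root $\kappa>1$ always exists, and the label equation only selects isolated values of $c$. One of them is $c\approx0.878$ (so $s^2\approx0.2286$), with $\kappa\approx1.078$. Substituting gives $-s^2+\tfrac s2(\kappa+\kappa^{-1})\approx0.25088$ and $(c-\tfrac12)+(c+\tfrac12)\kappa^2-\tfrac cs\kappa\approx0$. By reversal together with $c\mapsto-c$, the pair $(-c,\kappa^{-1})$ is a matching right fixed point with $\kappa_+<1$. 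These are plausibly the asymptotes of near-optimal PV chains; compare the block of labels $15/17$ with increasing coefficients in Appendix~\ref{app:technical-cases}. So a direct sign analysis, the symmetry reduction, or the substitution $\kappa=e^t$ cannot exclude a constant tail by looking at one end.

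\textbf{The missing idea is Step~4 of the paper: exact constancy of the tail combined with forward uniqueness.}
\begin{itemize}
\item Since $s(c_i)>0$, the recurrence determines $\kappa_i$ from $(c_{i-1},c_i,\kappa_{i-1})$.
\item The label equation is linear in $c_{i+1}$ with coefficient $\kappa_i^2>0$, so it then determines $c_{i+1}$ from $(c_{i-1},c_i,\kappa_i)$. Here $c_i=c_-\in(-1,1)$, so two-sided variations of $c_i$ are allowed.
\item The data $(c_-,c_-,\kappa)$ reproduce themselves. Induction from the left tail therefore gives $c_i=c_-$ and $\kappa_i=\kappa>1$ at every position.
\end{itemize}
The sequence can then never reach a right end with $\kappa_+<1$, and the $u_i$ grow geometrically, contradicting Eq.~\eqref{eq:equality-l2}. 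This is also why the lemma needs the tails to be exactly constant (Remark~\ref{rem:scope}). A sequence that merely converges to these fixed points, such as a connecting orbit from $(c_-,\kappa)$ to $(-c_-,\kappa^{-1})$, is not ruled out.
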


\begin{proof}
Assume for contradiction that such a sequence exists; then
$\mathcal P_I(c,u)=\betaPV$.

The contradiction has two parts.  First, the sequence cannot stop at
either end, because appending a sufficiently small auxiliary coefficient would
increase its value.  Second, varying each $c_i$ gives one additional
equation.  Together with Eq.~\eqref{eq:pv-equality-recurrence}, it forces
the constant left tail to continue throughout the sequence, in conflict
with the right end.

\paragraph*{Step 1: finite truncations are bounded by $\betaPV$.}
For every sequence with square-summable auxiliary coefficients,
$\mathcal P_I$ in Eq.~\eqref{eq:infinite-pv-value} is at most
$\betaPV$.  A finite sequence belongs, after zero padding, to the
variational family in Eq.~\eqref{eq:beta-pv}.  For an infinite sequence,
truncations to growing finite windows, after the same zero padding, are
genuine finite PV chains,
and their values converge to $\mathcal P_I(c,u)$ by the
absolute convergence noted below Eq.~\eqref{eq:infinite-pv-value}.
The same conclusion holds after changing finitely many labels or
auxiliary coefficients.

\paragraph*{Step 2: the sequence cannot stop.}
Suppose the sequence has an endpoint, say on the right, and put
$i_0=\max I$.  If $|c_{i_0}|<1$, set $c_{i_0+1}=-1$ and append a
small positive auxiliary coefficient $\varepsilon$.  The new nearest-neighbor term
$s(c_{i_0})u_{i_0}\varepsilon$ is positive and linear in
$\varepsilon$, while the new diagonal term and the normalization
change only quadratically.  The value therefore increases above
$\betaPV$, contradicting Step~1.  If $c_{i_0}=\pm1$, then
$s(c_{i_0})=0$; instead use the endpoint perturbation of
Appendix~\ref{app:technical-cases}, which moves $c_{i_0}$ into
$(-1,1)$ while appending an auxiliary coefficient.  Positivity of its linear
coefficient is guaranteed by $\betaPV<1/3$.  A left endpoint is
excluded by reversing the sequence and replacing every $c_i$ by
$-c_i$, which leaves Eq.~\eqref{eq:infinite-pv-value} unchanged.
Hence the sequence continues indefinitely in both directions.

\paragraph*{Step 3: what varying one spectral label requires.}
No internal label can equal $\sigma=\pm1$.  Replacing it by
$\sigma(1-\delta)$ creates a positive nearest-neighbor term of order
$\sqrt\delta$, while the affected diagonal terms change only by
$O(\delta)$.  For small $\delta>0$ this would raise the value above
$\betaPV$.  Thus $-1<c_i<1$ for every $i$.

Changing one $c_i$ affects only the two diagonal terms containing
that label and the nearest-neighbor term between their
auxiliary coefficients.  Both signs of a
sufficiently small change are allowed, and the current value is already
the supremum, so $c_i$ is an interior maximizer.  Differentiating the
numerator of Eq.~\eqref{eq:infinite-pv-value} gives
\begin{equation}
 \left(c_{i-1}-\frac12\right)u_i^2
 +\left(c_{i+1}+\frac12\right)u_{i+1}^2
 -\frac{c_i}{s(c_i)}u_i u_{i+1}=0.
\label{eq:parameter-stationarity}
\end{equation}

\paragraph*{Step 4: the constant left tail extends through the sequence.}
Set $\kappa_i=u_{i+1}/u_i>0$.  Equations
\eqref{eq:pv-equality-recurrence} and
\eqref{eq:parameter-stationarity} become
\begin{align}
 \betaPV
 &=d(c_{i-1},c_i)
 +\frac{s(c_{i-1})}{2\kappa_{i-1}}
 +\frac{s(c_i)}2\kappa_i,
\label{eq:ratio-recurrence}\\
 0
 &=c_{i-1}-\frac12
 +\left(c_{i+1}+\frac12\right)\kappa_i^2
 -\frac{c_i}{s(c_i)}\kappa_i.
\label{eq:label-recurrence}
\end{align}
Because $s(c_i)>0$, the first equation uniquely determines
$\kappa_i$ from $(c_{i-1},c_i,\kappa_{i-1})$; the second then
uniquely determines $c_{i+1}$.  By
Eq.~\eqref{eq:eventual-left-constancy}, there is a position $i_0$
with $(c_i,\kappa_i)=(c_-,\kappa)$ for all $i\leq i_0$, where
$\kappa>1$.  For sufficiently negative $i$, therefore,
$(c_{i-1},c_i,\kappa_{i-1})=(c_-,c_-,\kappa)$.  Substitution into the two recurrences gives the
same next values, $\kappa_i=\kappa$ and $c_{i+1}=c_-$.  Uniqueness
and induction give
\[
 c_i=c_-,
 \qquad
 \kappa_i=\kappa
 \quad\text{for every }i\geq i_0.
\]
Thus $u_{i+1}=\kappa u_i$ for all $i\geq i_0$.  Since
$\kappa>1$, this already contradicts the ratio
$\kappa_+<1$ required sufficiently far to the right by
Lemma~\ref{lem:equality-chain}.  Equivalently, the auxiliary coefficients grow
geometrically to the right and cannot satisfy
Eq.~\eqref{eq:equality-l2}.
\end{proof}

\begin{remark}
\label{rem:scope}
The contradiction is specific to the sequence constructed from a finite
spectral-weight matrix,
whose spectral label and coefficient ratio are eventually \emph{exactly} constant
at each infinite end.  It does not apply to arbitrary infinite sequences
whose spectral labels and coefficient ratios merely converge, and it makes no
claim about attainment by infinite-dimensional strategies.
\end{remark}

The proof above was written in elementary terms for quantum-information
readers.  Some of its steps nevertheless have standard interpretations in
other areas of mathematics.  In particular, the PV path construction of
Sec.~IV can be interpreted in terms of Markov chains.  In Sec.~V, the
three-term recurrence is the eigenvalue equation for a Jacobi operator, and
the identity used in Step~2 of Lemma~4 is called an anti-Monge relation in
optimization theory.  These connections are not needed for the proof itself,
but provide useful mathematical context; see
Refs.~\cite{NorrisMarkovChains,TeschlJacobi,BurkardKlinzRudolf}.

\begin{proof}[Proof of Theorem~\ref{thm:nonattainment}]
If a finite-dimensional strategy attained $\Istar$, its spectral-weight
matrix (Lemma~\ref{lem:table-bound}) would satisfy
\[
 \Istar=I_{3322}(\mathsf S)
 \leq\Phi(\theta)
 \leq\betaPV=\Istar.
\]
Thus $\Phi(\theta)=\betaPV$.  Lemma~\ref{lem:equality-chain} would then
produce the sequence of spectral labels and auxiliary coefficients described above,
which Lemma~\ref{lem:no-equality-chain} excludes.
\end{proof}

Corollary~\ref{cor:nonclosure} follows by a compactness argument.
Since $\Istar$ is a supremum, there are finite-dimensional
strategies whose values converge to it; by
Theorem~\ref{thm:variational}, finite PV chains already supply such a
sequence.  Each strategy determines a correlation, the list of
probabilities $P(\alpha\beta|xy)$, and these lists lie in a compact set:
finitely many probabilities, each in $[0,1]$.  Some subsequence of
the correlations therefore converges to a limit point $P_\infty$,
which lies in $\mathcal C_{qa}=\overline{\mathcal C_q}$ by the
definition of the closure.  The Bell value is linear in the
correlation, hence continuous, so $I_{3322}(P_\infty)=\Istar$.  If
$P_\infty$ belonged to $\mathcal C_q$, the finite-dimensional
strategy producing it would attain the supremum, which
Theorem~\ref{thm:nonattainment} excludes.  Hence
$P_\infty\in\mathcal C_{qa}\setminus\mathcal C_q$, which proves
Eq.~\eqref{eq:nonclosure-main}.

For Corollary~\ref{cor:dimension-witness}, fix $k$ and consider the
strategies whose local dimensions are both at most $k$.  After padding
the smaller space, these form a compact set in one fixed dimension, on
which the Bell value is continuous; hence the restricted maximum
$\beta_k$ is attained.  Theorem~\ref{thm:nonattainment} then gives
$\beta_k<\Istar$, proving Eq.~\eqref{eq:dimension-gap}.

\section{Discussion}
\label{sec:consequences}

The results establish nonclosure at a familiar facet of the local polytope
in the smallest binary-outcome scenario permitted by Beigi's two-setting
closure result~\cite{Beigi}.  The $I_{3322}$ attainment question and the
general connection between closedness and attainment were highlighted in
Ref.~\cite{DeltaGame}.
Concretely, $P_\infty$ can be approximated arbitrarily well by
finite-dimensional quantum correlations but cannot itself be generated by a
finite-dimensional strategy.

The result also makes $I_{3322}$ an unbounded dimension witness
\cite{DimWitness,VertesiPalDimension,NV}: a
certified value above $\beta_k$ proves that the local dimension
exceeds $k$ on at least one side.  Proposition~\ref{prop:qubit-bound}
proves $\beta_2=1/4$, so a certified value above $1/4$ certifies a
local dimension of at least three on one side.  Exact thresholds for
$k\geq3$, and their asymptotic approach to $\Istar$, remain open.

\begin{figure}[t]
\centering
\begin{tikzpicture}[v3 every figure, scale=0.84, transform shape]
  \begin{scope}[shift={(-1.28cm,-0.48cm)}, x=3.0cm, y=0.50cm]
    \draw[->, thick, black!70] (0.43,0.65) -- (2.72,0.65);
    \draw[->, thick, black!70] (0.43,0.65) -- (0.43,9.45);

    \foreach \xx/\lab in {
      0.4771/3,
      1.0000/10,
      1.4771/30,
      2.0000/100,
      2.4771/300
    } {
      \draw[black!60] (\xx,0.57) -- (\xx,0.73);
      \node[font=\scriptsize, anchor=north] at (\xx,0.48) {$\lab$};
    }
    \foreach \yy/\lab in {
      1/{10^{-9}},
      3/{10^{-7}},
      5/{10^{-5}},
      7/{10^{-3}},
      9/{10^{-1}}
    } {
      \draw[black!10] (0.43,\yy) -- (2.66,\yy);
      \draw[black!60] (0.40,\yy) -- (0.46,\yy);
      \node[font=\scriptsize, anchor=east] at (0.37,\yy) {$\lab$};
    }

    \draw[vthreerelaxed, densely dashed, line width=1.25pt]
      (0.46,2.18994) -- (2.65,2.18994);

    \draw[vthreepath, very thick]
      plot coordinates {
        (0.4771,8.94959)
        (0.6990,8.41724)
        (0.9542,7.88671)
        (1.1761,7.41007)
        (1.3979,6.85484)
        (1.6902,5.88191)
        (1.9956,4.19676)
        (2.2999,2.21291)
        (2.6009,2.18994)
      };
    \foreach \xx/\yy in {
      0.4771/8.94959,
      0.6990/8.41724,
      0.9542/7.88671,
      1.1761/7.41007,
      1.3979/6.85484,
      1.6902/5.88191,
      1.9956/4.19676,
      2.2999/2.21291
    } {
      \filldraw[fill=white, draw=vthreepath, very thick]
        (\xx,\yy) circle[radius=1.4pt];
    }
    \node[star, star points=5, star point ratio=2.2, fill=white,
          draw=white, minimum size=8pt, inner sep=0pt]
      at (2.6009,2.18994) {};
    \node[star, star points=5, star point ratio=2.2, fill=vthreeexact,
          draw=vthreeexact, minimum size=4.2pt, inner sep=0pt]
      at (2.6009,2.18994) {};

    \node[font=\footnotesize, anchor=north] at (1.58,-0.10)
      {number of Schmidt coefficients $n$};
    \node[font=\footnotesize, rotate=90, anchor=south] at (-0.03,5.05)
      {gap to level-four NPA value $U-I_n$};
  \end{scope}

  \fill[white] (4.05,2.82) rectangle (7.22,4.14);
  \draw[vthreepath, very thick] (4.25,3.93) -- ++(0.55,0)
    node[right,black,font=\scriptsize] {numerical PV strategies};
  \node[star, star points=5, star point ratio=2.2, fill=vthreeexact,
        draw=vthreeexact, minimum size=4.2pt, inner sep=0pt]
    at (4.51,3.51) {};
  \node[anchor=west,font=\scriptsize] at (4.82,3.51)
    {$n=399$ PV strategy};
  \draw[vthreerelaxed, densely dashed, line width=1.25pt]
    (4.25,3.09) -- ++(0.55,0)
    node[right,black,font=\scriptsize] {gap $U-I_{399}$};
\end{tikzpicture}
 \caption{\label{fig:convergence}
Numerical convergence of finite PV strategies.  We plot $U-I_n$,
where $I_n$ is the value of a PV strategy with $n$ Schmidt coefficients.
All points are numerical optimizations, the star marks $n=399$, and the
dashed line marks $U-I_{399}$.  Here $U=0.25087540$ is the ordinary
level-four NPA value reported in Ref.~\cite{AraujoEtAl2026}.}
\end{figure}

The main open question concerns infinite dimensions.  In the standard
hierarchy
$\mathcal C_q\subseteq\mathcal C_{qs}\subseteq\mathcal C_{qa}
\subseteq\mathcal C_{qc}$
\cite{Tsirelson93,ScholzWerner,Fritz,JNPPSW}, $\mathcal C_{qs}$
collects the correlations of possibly infinite-dimensional tensor-product
strategies and $\mathcal C_{qc}$ those of commuting-operator
strategies.  Equation~\eqref{eq:nonclosure-main} forces at least one
of the separations $\mathcal C_q\neq\mathcal C_{qs}$ and
$\mathcal C_{qs}\neq\mathcal C_{qa}$ already at $(3,3,2,2)$.
The compactness argument above produces a maximizing correlation in
$\mathcal C_{qa}$; whether \emph{any} maximizing correlation lies in
$\mathcal C_{qs}$ is the attainment question our result leaves open.
If one does, the supremum is
attained on an infinite-dimensional tensor product, and maximal
violation of $I_{3322}$ requires infinite-dimensional entanglement,
completing the picture conjectured by Coladangelo and Stark
\cite{ColadangeloStark}.  If no maximizing correlation lies in
$\mathcal C_{qs}$, then $\mathcal C_{qs}\neq\mathcal C_{qa}$, and
no tensor-product strategy of any dimension maximally violates
$I_{3322}$.  To our knowledge, whether the commuting-operator value of
$I_{3322}$ exceeds $\Istar$ is a further open question.

Several further questions remain.  First, how fast does the
dimension-restricted gap $\Istar-\beta_k$ close as $k$ grows?
The decay rate determines the dimension needed to approach the
supremum to a given precision.  To illustrate the convergence with dimension,
Fig.~\ref{fig:convergence} shows our fixed-length PV optimizations up to
$n=399$.  These values come from direct numerical optimization of
Eq.~\eqref{eq:pv-value}.  For comparison, the best NPA value reported in
Ref.~\cite{AraujoEtAl2026} is $0.25087538$.  Second, does the supremum admit a
simpler analytic description?  Its value is known numerically to high
precision, and we are not aware of a closed form
\cite{AraujoMathOverflow}.  Third, is there a
robust rigidity statement for nearly maximal finite-dimensional strategies
\cite{SupicBowles}---that is, must every strategy with value close to
$\Istar$ be close, in a suitable sense, to some finite PV chain?
Theorem~\ref{thm:variational} is a statement about values; a
self-testing statement would identify the strategies behind them.
Finally, the reduction of Secs.~\ref{sec:strategy-table}
and~\ref{sec:table-to-pv} used the swap symmetry of the functional,
the linear elimination of the third measurements, and Jordan's lemma;
which other Bell functionals admit a comparable reduction to an
explicit variational family is open.
 
\begin{acknowledgments}
I thank Nicolas Gisin for helpful discussions.

\textbf{AI statement.}
I should explain how this paper arrived.  On 25 July 2026 I gave the
finite-dimensional non-attainment conjecture to OpenAI Codex
(GPT-5.6-sol).  It found a two-qubit bound and then politely stopped.  I
replied, ``I believe you gave up too quickly and you will succeed.  Make a very
strong push to either prove or disprove the conjecture.''  Roughly three hours
later, the first end-to-end proof appeared.  Perhaps the pep talk mattered:
Anthropic recently reported that variants of ``keep going'' and ``believe in
yourself'' seemed to help Claude overcome its initial skepticism during a
Riemann-hypothesis experiment~\cite{AnthropicRiemann}.  The proof was not yet
written in anything a quantum-information reader would recognize as English:
it spoke of a ``classical shadow,'' a ``symmetric transport plan,'' a
``positive cocycle,'' ``heteroclinic orbits,'' ``spectral edges,'' and a
``domain wall.''  Much of the work since---with Anthropic's Claude also serving
as translator and hostile referee---has been to turn that first draft into the
argument above.  The same tools were then used to formalize the argument in Lean~4 (Appendix~\ref{app:lean}).

During the same period, Seth Douglas---who states on his GitHub profile that
he is building a ``human+AI automated research engine'' in his spare
time---posted a vast
AI-assisted repository attacking the same
problem~\cite{DouglasI3322}.  The repository seems to take a significantly
different approach from ours, but I have not been able to verify its
correctness; nothing in the present paper depends on it.  I nevertheless
invite readers to look at it.  The experience of trying to read the
repository---together with the equally bewildering state of our first
draft---illustrates the continuing human task of digesting, reconstructing,
checking, and explaining machine-generated proofs, or ``proof
digestion''~\cite{TaoMathematicsAI}.  I set the problem and standards of
evidence and take responsibility for the final claims.  These are unusual
circumstances, but I feel an obligation to share the result with the community.
\end{acknowledgments}

\bibliographystyle{apsrev4-2}
\bibliography{references}

@article{Bell,
  author  = {Bell, John S.},
  title   = {On the {Einstein Podolsky Rosen} paradox},
  journal = {Physics Physique Fizika},
  volume  = {1},
  number  = {3},
  pages   = {195--200},
  year    = {1964},
  doi     = {10.1103/PhysicsPhysiqueFizika.1.195}
}

@article{CHSH,
  author  = {Clauser, John F. and Horne, Michael A. and Shimony, Abner
             and Holt, Richard A.},
  title   = {Proposed experiment to test local hidden-variable theories},
  journal = {Physical Review Letters},
  volume  = {23},
  number  = {15},
  pages   = {880--884},
  year    = {1969},
  doi     = {10.1103/PhysRevLett.23.880}
}

@article{ReviewBell,
  author  = {Brunner, Nicolas and Cavalcanti, Daniel and Pironio, Stefano
             and Scarani, Valerio and Wehner, Stephanie},
  title   = {Bell nonlocality},
  journal = {Reviews of Modern Physics},
  volume  = {86},
  number  = {2},
  pages   = {419--478},
  year    = {2014},
  doi     = {10.1103/RevModPhys.86.419},
  eprint  = {1303.2849},
  archivePrefix = {arXiv},
  primaryClass  = {quant-ph}
}

@article{Slofstra,
  author  = {Slofstra, William},
  title   = {The set of quantum correlations is not closed},
  journal = {Forum of Mathematics, Pi},
  volume  = {7},
  pages   = {e1},
  year    = {2019},
  doi     = {10.1017/fmp.2018.3},
  eprint  = {1703.08618},
  archivePrefix = {arXiv},
  primaryClass  = {quant-ph}
}

@article{SlofstraTsirelson,
  author  = {Slofstra, William},
  title   = {Tsirelson's problem and an embedding theorem for groups arising
             from non-local games},
  journal = {Journal of the American Mathematical Society},
  volume  = {33},
  number  = {1},
  pages   = {1--56},
  year    = {2020},
  doi     = {10.1090/jams/929},
  eprint  = {1606.03140},
  archivePrefix = {arXiv},
  primaryClass  = {quant-ph}
}

@article{DPP,
  author  = {Dykema, Ken and Paulsen, Vern I. and Prakash, Jitendra},
  title   = {Non-closure of the set of quantum correlations via graphs},
  journal = {Communications in Mathematical Physics},
  volume  = {365},
  number  = {3},
  pages   = {1125--1142},
  year    = {2019},
  doi     = {10.1007/s00220-019-03301-1},
  eprint  = {1709.05032},
  archivePrefix = {arXiv},
  primaryClass  = {math.OA}
}

@article{ColadangeloStark,
  author  = {Coladangelo, Andrea and Stark, Jalex},
  title   = {An inherently infinite-dimensional quantum correlation},
  journal = {Nature Communications},
  volume  = {11},
  pages   = {3335},
  year    = {2020},
  doi     = {10.1038/s41467-020-17077-9},
  eprint  = {1804.05116},
  archivePrefix = {arXiv},
  primaryClass  = {quant-ph}
}

@article{Tsirelson93,
  author  = {Tsirelson, Boris S.},
  title   = {Some results and problems on quantum {Bell}-type inequalities},
  journal = {Hadronic Journal Supplement},
  volume  = {8},
  number  = {4},
  pages   = {329--345},
  year    = {1993}
}

@misc{MIPstarRE,
  author  = {Ji, Zhengfeng and Natarajan, Anand and Vidick, Thomas
             and Wright, John and Yuen, Henry},
  title   = {{MIP}$^*=${RE}},
  year    = {2020},
  eprint  = {2001.04383},
  archivePrefix = {arXiv},
  primaryClass  = {quant-ph},
  note    = {Abridged version: Commun. ACM \textbf{64}(11), 131--138
             (2021), doi:10.1145/3485628}
}

@article{Froissart,
  author  = {Froissart, Marcel},
  title   = {Constructive generalization of {Bell}'s inequalities},
  journal = {Il Nuovo Cimento B},
  volume  = {64},
  number  = {2},
  pages   = {241--251},
  year    = {1981},
  doi     = {10.1007/BF02903286}
}

@article{Sliwa2003,
  author  = {{\'S}liwa, Cezary},
  title   = {Symmetries of the {Bell} correlation inequalities},
  journal = {Physics Letters A},
  volume  = {317},
  number  = {3--4},
  pages   = {165--168},
  year    = {2003},
  doi     = {10.1016/S0375-9601(03)01115-0},
  eprint  = {quant-ph/0305190},
  archivePrefix = {arXiv}
}

@article{BrunnerGisin2008,
  author  = {Brunner, Nicolas and Gisin, Nicolas},
  title   = {Partial list of bipartite {Bell} inequalities with four
             binary settings},
  journal = {Physics Letters A},
  volume  = {372},
  number  = {18},
  pages   = {3162--3167},
  year    = {2008},
  doi     = {10.1016/j.physleta.2008.01.052},
  eprint  = {0711.3362},
  archivePrefix = {arXiv},
  primaryClass  = {quant-ph}
}

@article{CollinsGisin,
  author  = {Collins, Daniel and Gisin, Nicolas},
  title   = {A relevant two qubit {Bell} inequality inequivalent to the
             {CHSH} inequality},
  journal = {Journal of Physics A: Mathematical and General},
  volume  = {37},
  number  = {5},
  pages   = {1775--1787},
  year    = {2004},
  doi     = {10.1088/0305-4470/37/5/021},
  eprint  = {quant-ph/0306129},
  archivePrefix = {arXiv}
}

@article{PV,
  author  = {P{\'a}l, K{\'a}roly F. and V{\'e}rtesi, Tam{\'a}s},
  title   = {Maximal violation of a bipartite three-setting, two-outcome
             {Bell} inequality using infinite-dimensional quantum systems},
  journal = {Physical Review A},
  volume  = {82},
  number  = {2},
  pages   = {022116},
  year    = {2010},
  doi     = {10.1103/PhysRevA.82.022116},
  eprint  = {1006.3032},
  archivePrefix = {arXiv},
  primaryClass  = {quant-ph}
}

@article{NPA1,
  author  = {Navascu{\'e}s, Miguel and Pironio, Stefano and Ac{\'i}n, Antonio},
  title   = {Bounding the set of quantum correlations},
  journal = {Physical Review Letters},
  volume  = {98},
  number  = {1},
  pages   = {010401},
  year    = {2007},
  doi     = {10.1103/PhysRevLett.98.010401},
  eprint  = {quant-ph/0607119},
  archivePrefix = {arXiv}
}

@article{NPA2,
  author  = {Navascu{\'e}s, Miguel and Pironio, Stefano and Ac{\'i}n, Antonio},
  title   = {A convergent hierarchy of semidefinite programs characterizing
             the set of quantum correlations},
  journal = {New Journal of Physics},
  volume  = {10},
  number  = {7},
  pages   = {073013},
  year    = {2008},
  doi     = {10.1088/1367-2630/10/7/073013},
  eprint  = {0803.4290},
  archivePrefix = {arXiv},
  primaryClass  = {quant-ph}
}

@article{AraujoEtAl2026,
  author  = {Ara{\'u}jo, Mateus and Klep, Igor and Garner, Andrew J. P.
             and V{\'e}rtesi, Tam{\'a}s and Navascu{\'e}s, Miguel},
  title   = {First-order optimality conditions for non-commutative
             optimization problems},
  journal = {Foundations of Computational Mathematics},
  year    = {2026},
  month   = jul,
  doi     = {10.1007/s10208-026-09761-x},
  eprint  = {2311.18707},
  archivePrefix = {arXiv},
  primaryClass  = {quant-ph},
  note    = {Published online 7 July 2026}
}

@misc{AraujoMathOverflow,
  author       = {Ara{\'u}jo, Mateus},
  title        = {Analytic expression for the {Tsirelson} bound of the
                  {$I_{3322}$} inequality?},
  howpublished = {MathOverflow},
  year         = {2016},
  month        = jul,
  url          = {https://mathoverflow.net/questions/243624/analytic-expression-for-the-tsirelson-bound-of-the-i3322-inequality},
  note         = {Question posted 4 July 2016}
}

@article{DimWitness,
  author  = {Brunner, Nicolas and Pironio, Stefano and Ac{\'i}n, Antonio
             and Gisin, Nicolas and M{\'e}thot, Andr{\'e} Allan
             and Scarani, Valerio},
  title   = {Testing the dimension of {Hilbert} spaces},
  journal = {Physical Review Letters},
  volume  = {100},
  number  = {21},
  pages   = {210503},
  year    = {2008},
  doi     = {10.1103/PhysRevLett.100.210503},
  eprint  = {0802.0760},
  archivePrefix = {arXiv},
  primaryClass  = {quant-ph}
}

@article{NV,
  author  = {Navascu{\'e}s, Miguel and V{\'e}rtesi, Tam{\'a}s},
  title   = {Bounding the set of finite-dimensional quantum correlations},
  journal = {Physical Review Letters},
  volume  = {115},
  number  = {2},
  pages   = {020501},
  year    = {2015},
  doi     = {10.1103/PhysRevLett.115.020501},
  eprint  = {1412.0924},
  archivePrefix = {arXiv},
  primaryClass  = {quant-ph}
}

@article{MoroderEtAl,
  author  = {Moroder, Tobias and Bancal, Jean-Daniel
             and Liang, Yeong-Cherng and Hofmann, Martin
             and G{\"u}hne, Otfried},
  title   = {Device-independent entanglement quantification
             and related applications},
  journal = {Physical Review Letters},
  volume  = {111},
  number  = {3},
  pages   = {030501},
  year    = {2013},
  doi     = {10.1103/PhysRevLett.111.030501},
  eprint  = {1302.1336},
  archivePrefix = {arXiv},
  primaryClass  = {quant-ph}
}

@article{NavascuesDeLaTorreVertesi,
  author  = {Navascu{\'e}s, Miguel and de la Torre, Gonzalo
             and V{\'e}rtesi, Tam{\'a}s},
  title   = {Characterization of quantum correlations with local dimension
             constraints and its device-independent applications},
  journal = {Physical Review X},
  volume  = {4},
  number  = {1},
  pages   = {011011},
  year    = {2014},
  doi     = {10.1103/PhysRevX.4.011011},
  eprint  = {1308.3410},
  archivePrefix = {arXiv},
  primaryClass  = {quant-ph}
}

@article{VidickWehner,
  author  = {Vidick, Thomas and Wehner, Stephanie},
  title   = {More nonlocality with less entanglement},
  journal = {Physical Review A},
  volume  = {83},
  number  = {5},
  pages   = {052310},
  year    = {2011},
  doi     = {10.1103/PhysRevA.83.052310},
  eprint  = {1011.5206},
  archivePrefix = {arXiv},
  primaryClass  = {quant-ph}
}

@article{GigenaKaniewski,
  author  = {Gigena, Nicol{\'a}s and Kaniewski, J{\k e}drzej},
  title   = {Quantum value for a family of {$I_{3322}$}-like {Bell}
             functionals},
  journal = {Physical Review A},
  volume  = {106},
  number  = {1},
  pages   = {012401},
  year    = {2022},
  doi     = {10.1103/PhysRevA.106.012401},
  eprint  = {2203.01837},
  archivePrefix = {arXiv},
  primaryClass  = {quant-ph}
}

@article{BrunnerEtAlDetection,
  author  = {Brunner, Nicolas and Gisin, Nicolas and Scarani, Valerio
             and Simon, Christoph},
  title   = {Detection loophole in asymmetric {Bell} experiments},
  journal = {Physical Review Letters},
  volume  = {98},
  number  = {22},
  pages   = {220403},
  year    = {2007},
  doi     = {10.1103/PhysRevLett.98.220403},
  eprint  = {quant-ph/0702130},
  archivePrefix = {arXiv}
}

@article{PomaricoEtAl,
  author  = {Pomarico, Enrico and Bancal, Jean-Daniel
             and Sanguinetti, Bruno and Rochdi, Anas and Gisin, Nicolas},
  title   = {Various quantum nonlocality tests with a commercial
             two-photon entanglement source},
  journal = {Physical Review A},
  volume  = {83},
  number  = {5},
  pages   = {052104},
  year    = {2011},
  doi     = {10.1103/PhysRevA.83.052104},
  eprint  = {1101.2313},
  archivePrefix = {arXiv},
  primaryClass  = {quant-ph}
}

@article{SuDIQKD,
  author  = {Su, Hong-Yi},
  title   = {Monte Carlo approach to the evaluation of the security of
             device-independent quantum key distribution},
  journal = {New Journal of Physics},
  volume  = {25},
  number  = {12},
  pages   = {123036},
  year    = {2023},
  doi     = {10.1088/1367-2630/ad141a},
  eprint  = {2308.03030},
  archivePrefix = {arXiv},
  primaryClass  = {quant-ph}
}

@article{HsuEtAl,
  author  = {Hsu, Hsin-Yu and Tabia, Gelo Noel M. and Chen, Kai-Siang
             and Liu, Mu-En and V{\'e}rtesi, Tam{\'a}s and Brunner, Nicolas
             and Liang, Yeong-Cherng},
  title   = {Trading symmetry for {Hilbert}-space dimension in
             {Bell}-inequality violation},
  journal = {Quantum Science and Technology},
  volume  = {11},
  number  = {3},
  pages   = {035003},
  year    = {2026},
  doi     = {10.1088/2058-9565/ae6bb1},
  eprint  = {2601.02893},
  archivePrefix = {arXiv},
  primaryClass  = {quant-ph}
}

@article{Beigi,
  author  = {Beigi, Salman},
  title   = {Separation of quantum, spatial quantum, and approximate
             quantum correlations},
  journal = {Quantum},
  volume  = {5},
  pages   = {389},
  year    = {2021},
  doi     = {10.22331/q-2021-01-28-389},
  eprint  = {2004.11103},
  archivePrefix = {arXiv},
  primaryClass  = {quant-ph}
}

@article{LiangDoherty,
  author  = {Liang, Yeong-Cherng and Doherty, Andrew C.},
  title   = {Bounds on quantum correlations in {Bell}-inequality experiments},
  journal = {Physical Review A},
  volume  = {75},
  number  = {4},
  pages   = {042103},
  year    = {2007},
  doi     = {10.1103/PhysRevA.75.042103},
  eprint  = {quant-ph/0608128},
  archivePrefix = {arXiv}
}

@article{VertesiPalDimension,
  author  = {V{\'e}rtesi, Tam{\'a}s and P{\'a}l, K{\'a}roly F.},
  title   = {Bounding the dimension of bipartite quantum systems},
  journal = {Physical Review A},
  volume  = {79},
  number  = {4},
  pages   = {042106},
  year    = {2009},
  doi     = {10.1103/PhysRevA.79.042106},
  eprint  = {0812.1572},
  archivePrefix = {arXiv},
  primaryClass  = {quant-ph}
}

@article{PalVertesi2008,
  author  = {P{\'a}l, K{\'a}roly F. and V{\'e}rtesi, Tam{\'a}s},
  title   = {Efficiency of higher-dimensional {Hilbert} spaces for the
             violation of {Bell} inequalities},
  journal = {Physical Review A},
  volume  = {77},
  number  = {4},
  pages   = {042105},
  year    = {2008},
  doi     = {10.1103/PhysRevA.77.042105},
  eprint  = {0712.4320},
  archivePrefix = {arXiv},
  primaryClass  = {quant-ph}
}

@misc{ScholzWerner,
  author        = {Scholz, Volkher B. and Werner, Reinhard F.},
  title         = {Tsirelson's problem},
  year          = {2008},
  eprint        = {0812.4305},
  archivePrefix = {arXiv},
  primaryClass  = {math-ph}
}

@article{Fritz,
  author  = {Fritz, Tobias},
  title   = {Tsirelson's problem and {Kirchberg}'s conjecture},
  journal = {Reviews in Mathematical Physics},
  volume  = {24},
  number  = {5},
  pages   = {1250012},
  year    = {2012},
  doi     = {10.1142/S0129055X12500122},
  eprint  = {1008.1168},
  archivePrefix = {arXiv},
  primaryClass  = {math-ph}
}

@article{JNPPSW,
  author  = {Junge, Marius and Navascu{\'e}s, Miguel and Palazuelos, Carlos
             and P{\'e}rez-Garc{\'i}a, David and Scholz, Volkher B.
             and Werner, Reinhard F.},
  title   = {Connes' embedding problem and {Tsirelson}'s problem},
  journal = {Journal of Mathematical Physics},
  volume  = {52},
  number  = {1},
  pages   = {012102},
  year    = {2011},
  doi     = {10.1063/1.3514538},
  eprint  = {1008.1142},
  archivePrefix = {arXiv},
  primaryClass  = {math-ph}
}

@article{Jordan1875,
  author  = {Jordan, Camille},
  title   = {Essai sur la g{\'e}om{\'e}trie {\`a} {$n$} dimensions},
  journal = {Bulletin de la Soci{\'e}t{\'e} Math{\'e}matique de France},
  volume  = {3},
  pages   = {103--174},
  year    = {1875},
  doi     = {10.24033/bsmf.90}
}

@article{PironioEtAl2009,
  author  = {Pironio, Stefano and Ac{\'i}n, Antonio and Brunner, Nicolas
             and Gisin, Nicolas and Massar, Serge and Scarani, Valerio},
  title   = {Device-independent quantum key distribution secure against
             collective attacks},
  journal = {New Journal of Physics},
  volume  = {11},
  number  = {4},
  pages   = {045021},
  year    = {2009},
  doi     = {10.1088/1367-2630/11/4/045021},
  eprint  = {0903.4460},
  archivePrefix = {arXiv},
  primaryClass  = {quant-ph}
}

@article{BernardsGuhne,
  author  = {Bernards, Fabian and G{\"u}hne, Otfried},
  title   = {Generalizing optimal {Bell} inequalities},
  journal = {Physical Review Letters},
  volume  = {125},
  number  = {20},
  pages   = {200401},
  year    = {2020},
  doi     = {10.1103/PhysRevLett.125.200401},
  eprint  = {2005.08687},
  archivePrefix = {arXiv},
  primaryClass  = {quant-ph}
}

@article{PironioLifting,
  author  = {Pironio, Stefano},
  title   = {Lifting {Bell} inequalities},
  journal = {Journal of Mathematical Physics},
  volume  = {46},
  number  = {6},
  pages   = {062112},
  year    = {2005},
  doi     = {10.1063/1.1928727},
  eprint  = {quant-ph/0503179},
  archivePrefix = {arXiv}
}

@article{SupicBowles,
  author  = {{\v S}upi{\'c}, Ivan and Bowles, Joseph},
  title   = {Self-testing of quantum systems: a review},
  journal = {Quantum},
  volume  = {4},
  pages   = {337},
  year    = {2020},
  doi     = {10.22331/q-2020-09-30-337},
  eprint  = {1904.10042},
  archivePrefix = {arXiv},
  primaryClass  = {quant-ph}
}

@misc{DouglasI3322,
  author       = {Douglas, Seth},
  title        = {Exact {$I_{3322}$} quantum wall},
  year         = {2026},
  howpublished = {Public GitHub repository},
  note         = {\url{https://github.com/Apsiape/i3322-exact-wall};
                  author profile: \url{https://github.com/Apsiape}}
}

@article{DeltaGame,
  author  = {Dykema, Ken and Paulsen, Vern I. and Prakash, Jitendra},
  title   = {The {Delta} game},
  journal = {Quantum Information and Computation},
  volume  = {18},
  number  = {7\&8},
  pages   = {599--616},
  year    = {2018},
  doi     = {10.26421/QIC18.7-8-5},
  eprint  = {1707.06186},
  archivePrefix = {arXiv},
  primaryClass  = {math.OA}
}

@article{AltepeterEtAl,
  author  = {Altepeter, J. B. and Jeffrey, E. R. and Kwiat, P. G.
             and Tanzilli, S. and Gisin, N. and Ac{\'i}n, A.},
  title   = {Experimental methods for detecting entanglement},
  journal = {Physical Review Letters},
  volume  = {95},
  number  = {3},
  pages   = {033601},
  year    = {2005},
  doi     = {10.1103/PhysRevLett.95.033601}
}

@article{Mortimer2025,
  author  = {Mortimer, Luke},
  title   = {Bounding large-scale {Bell} inequalities},
  journal = {Physical Review A},
  volume  = {111},
  number  = {5},
  pages   = {052442},
  year    = {2025},
  doi     = {10.1103/PhysRevA.111.052442},
  eprint  = {2412.08532},
  archivePrefix = {arXiv},
  primaryClass  = {quant-ph}
}

@misc{FloraEtAl2026,
  author        = {Flora, Francesco and Matos, Losel and Heightman, Tim
                   and Kriv{\'a}chy, Tam{\'a}s and Garriga, Adan
                   and Ac{\'i}n, Antonio},
  title         = {Moment optimization in the
                   {Navascu{\'e}s--Pironio--Ac{\'i}n} hierarchy},
  year          = {2026},
  eprint        = {2607.14755},
  archivePrefix = {arXiv},
  primaryClass  = {quant-ph}
}

@misc{TaoMathematicsAI,
  author        = {Tao, Terence},
  title         = {Mathematics in the age of {AI}},
  year          = {2026},
  eprint        = {2608.16753},
  archivePrefix = {arXiv},
  primaryClass  = {math.HO}
}

@misc{AnthropicRiemann,
  author       = {{Anthropic}},
  title        = {Learning more about {Claude}'s mathematical capabilities},
  year         = {2026},
  month        = aug,
  howpublished = {Research note},
  note         = {\url{https://www.anthropic.com/research/riemann-zeta}}
}

@article{PitowskySvozil,
  author  = {Pitowsky, Itamar and Svozil, Karl},
  title   = {Optimal tests of quantum nonlocality},
  journal = {Physical Review A},
  volume  = {64},
  number  = {1},
  pages   = {014102},
  year    = {2001},
  doi     = {10.1103/PhysRevA.64.014102},
  eprint  = {quant-ph/0011060},
  archivePrefix = {arXiv}
}

@book{NorrisMarkovChains,
  author    = {J. R. Norris},
  title     = {Markov Chains},
  publisher = {Cambridge University Press},
  year      = {1997},
  doi       = {10.1017/CBO9780511810633}
}

@book{TeschlJacobi,
  author    = {Gerald Teschl},
  title     = {Jacobi Operators and Completely Integrable Nonlinear Lattices},
  series    = {Mathematical Surveys and Monographs},
  volume    = {72},
  publisher = {American Mathematical Society},
  address   = {Providence, RI},
  year      = {2000},
  isbn      = {978-0-8218-1940-1}
}

@article{BurkardKlinzRudolf,
  author  = {Rainer E. Burkard and Bettina Klinz and R{\"u}diger Rudolf},
  title   = {Perspectives of Monge Properties in Optimization},
  journal = {Discrete Applied Mathematics},
  volume  = {70},
  number  = {2},
  pages   = {95--161},
  year    = {1996},
  doi     = {10.1016/0166-218X(95)00103-X}
}

\appendix
\setcounter{secnumdepth}{2}
\section{The exact two-qubit threshold}
\label{app:qubit-bound}

\begin{proposition}
\label{prop:qubit-bound}
The maximum of $I_{3322}$ over two-qubit states and arbitrary binary
measurements is
\[
 \beta_2=\frac14.
\]
\end{proposition}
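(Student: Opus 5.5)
The plan has two halves. For the lower bound $\beta_2\geq1/4$ I would exhibit an explicit two-qubit strategy with value exactly $1/4$ and verify it by direct substitution into Eq.~\eqref{eq:i3322}. The numerical optimizers of Refs.~\cite{CollinsGisin,PalVertesi2008} are the starting point: a two-qubit state and coplanar projective measurements, with angles chosen so that the value in Eq.~\eqref{eq:F-reflections} equals $4(1+1/4)=5$. I would first try the maximally entangled state, with Alice's first two observables at a common angle that makes $s(c)=1/2$. The interior PV coefficient $s(c)-s(c)^2$ reaches $1/4$ exactly at this point, which suggests the right angle. The remaining parameters are then fixed by the closed-form optimal third measurement of Sec.~\ref{sec:third-measurements}. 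This direction is a finite computation.

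For the upper bound I would not use Theorem~\ref{thm:variational}, since $\betaPV>1/4$. Instead I would run the reduction of Sec.~\ref{sec:strategy-table} while keeping track of the dimension. After purification and dilation the state is on $\mathbb C^2\otimes\mathbb C^2$ with Schmidt matrix $D=\mathrm{diag}(\cos t,\sin t)$. Each effect is either trivial ($0$ or $\1$) or a rank-one projector. The trivial cases turn $I_{3322}$ into a lifted CHSH-type expression (or something simpler), and I would dispatch them separately using the known CHSH-type bounds. In the generic case, Alice's $p=a_1+a_2$ on $\mathbb C^2$ is a single Jordan block: it has exactly one pair of labels $\pm c$, and Bob's $q$ has one pair $\pm e$. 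I would avoid the doubling of Lemma~\ref{lem:symmetric-form} and work with the unsymmetrized analogue of Eq.~\eqref{eq:reduced-I}. Optimizing $a_3$ and $b_3$ separately gives trace-norm terms $\lVert D r D\rVert_1$ and $\lVert D\tau D\rVert_1$, with $\lVert D\tau D\rVert_1$ paired against Alice's side in the same way. In the qubit case these are explicit: $r$ maps the $c$-eigenvector to the $-c$-eigenvector with magnitude $2s(c)$, so the Cauchy--Schwarz step~\eqref{eq:trace-bound} can be replaced by an exact $2\times2$ singular-value computation.

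This reduces the Bell value to an explicit function of the parameters $c$, $e$, the Schmidt angle $t$, and the relative angle between Alice's and Bob's eigenbases. The target is to show that this function is at most $1/4$. I would organize it as $\sum d(\cdot,\cdot)\theta+(\text{coupling})$ with a $2\times2$ weight matrix $\theta$ on labels $\{\pm c\}\times\{\pm e\}$. I would then bound the coupling terms by AM--GM, in the same way that $2\lambda_i\lambda_{i+1}\leq\lambda_i^2+\lambda_{i+1}^2$ produced the bound $s-s^2\leq1/4$ for constant-label chains in Sec.~\ref{sec:pv-family}. The intuition is that a qubit offers only one Jordan parameter per party, so the spectral labels cannot vary along a chain. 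Section~\ref{sec:pv-family} shows that exceeding $1/4$ requires exactly such variation.

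The main obstacle is this final inequality. The two parties' labels $c$ and $e$ are independent, so the qubit configuration is not literally a constant-label PV chain. Moreover, Bob's first two measurements are not tied to Alice's as they are in the symmetric form. My first attempt would be to prove the bound as an operator inequality: show that $\tfrac14\1-\mathcal B\succeq0$ for the $4\times4$ Bell operator, with $a_3$ and $b_3$ already optimized, via an explicit sum-of-squares certificate. I would expect this certificate to be tight on the one-parameter family found in the lower-bound construction. If that fails, I would fall back on a Lagrange-multiplier analysis over the few remaining angles, using the $c\mapsto-c$ reflection symmetry to cut down the critical points.
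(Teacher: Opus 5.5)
You have a genuine gap in the upper bound. Your lower-bound plan is fine: the maximally entangled state with $c=\sqrt3/2$ (so $s(c)=1/2$) is exactly the paper's equality case. The problem is the generic-case inequality you flag as ``the main obstacle.'' It is essentially the whole content of the proposition, and the mechanism you suggest for it fails. Keep the Cauchy--Schwarz step; the exact formula $\lVert DRD\rVert_1=\sqrt{\langle R\rangle^2+\sin^2 2t}$ only brings back $t$ and the basis orientation, and the relaxation already has maximum exactly $1/4$. Your unsymmetrized functional is then
$\Phi_2(\theta)=\sum_{\ell,\ell'}d(\ell,\ell')\theta_{\ell\ell'}+s(c)\sqrt{R_cR_{-c}}+s(e)\sqrt{C_eC_{-e}}$, with $\theta$ supported on $\{\pm c\}\times\{\pm e\}$. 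Equal-weight AM--GM, $\sqrt{R_cR_{-c}}\leq1/2$, is what gives $s-s^2\leq1/4$ for constant chains. Here it only gives $\Phi_2\leq\max_{\ell,\ell'}d(\ell,\ell')+\frac{s(c)+s(e)}{2}$. At $(c,e)\approx(0.92,0.64)$ this is $\approx0.309$, the coarse bound $(\sqrt5-1)/4$ of Appendix~\ref{app:technical-cases}, not $1/4$. The loss comes from $\theta$ concentrated on $(c,e)$, where the true coupling vanishes. So you need weights that depend on $(c,e)$, or an honest maximization over $\theta$, and the proposal supplies neither. The SOS and Lagrange fallbacks are also unspecified. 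After optimizing $a_3,b_3$ there is no state-independent $4\times4$ operator left, and any certificate must use qubit-specific relations, because $\betaPV>1/4$.

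The paper closes exactly this step. On a qubit one can write $p=2cP$, $r=2s(c)R$, $q=2eQ$, $\tau=2s(e)T$ with $\{P,R\}=\{Q,T\}=0$; this is the qubit-specific input. Parametrize $\theta$ by $x=\langle P\rangle$, $y=\langle Q\rangle$, $z=\langle PQ\rangle$. Nonnegativity of the entries of $\theta$ gives $z\leq1-|x-y|$. Anticommutation gives $\langle Rb_3\rangle\leq\sqrt{1-x^2}$ and $\langle a_3T\rangle\leq\sqrt{1-y^2}$, so your $\Phi_2$ becomes $-1+G/2$ with the paper's $G(x,y)$.

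Proving $G\leq5/2$ still needs a real case analysis:
\begin{itemize}
\item For $x\leq y$, concavity and the signs of the unconstrained maximizers put the maximum on $x=y$.
\item For $x\geq y$ the same holds if $c,e>1/2$. Otherwise one drops the constraint and uses $\sqrt{1-t}\leq1-t/2$.
\item On the diagonal, the trigonometric substitution reduces the bound to $2+2w-2w^2\leq5/2$.
\end{itemize}
In your language, the diagonal case is the two-periodic chain $c,e,c,e,\ldots$, with top value $ce-1+\frac12\sqrt{(c-e)^2+(s(c)+s(e))^2}\leq1/4$. So the qubit labels alternate between $c$ and $e$ rather than staying constant, and the off-diagonal cases need separate handling.

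Separately, purification and Naimark dilation enlarge the local spaces, so they do not keep you on $\mathbb C^2\otimes\mathbb C^2$. Use linearity in the state and in each effect instead; the extreme points of qubit effects are $0$, $\1$ and rank-one projectors. Your CHSH treatment of the trivial cases matches the paper's ($F\leq2+2\sqrt2$, or $I_{3322}\leq0$).
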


\begin{proof}
The Bell value is separately affine in every binary effect, whose
extreme points on a qubit are projections.  It is therefore enough to
consider projective measurements.  Put $F:=4(1+I_{3322})$, as in
Eq.~\eqref{eq:F-reflections}.

Suppose first that none of $a_1,a_2,b_1,b_2$ equals $\pm\1$, so
that on a qubit each has one $+1$ and one $-1$ eigenvector.
Their sums and differences can be written
\[
 \begin{gathered}
 p=2cP,\quad r=2\bar cR,\quad \bar c=\sqrt{1-c^2},\\
 q=2dQ,\quad \tau=2\bar dT,\quad \bar d=\sqrt{1-d^2}.
 \end{gathered}
\]
where $0\leq c,d\leq1$, and $P,R$ and $Q,T$ are pairs of
anticommuting $\pm1$-valued observables.  If a coefficient vanishes,
choose the unused observable arbitrarily.  For an arbitrary state, set
\[
 \begin{gathered}
 x=\langle P\rangle,\quad y=\langle Q\rangle,
 \quad z=\langle PQ\rangle,\\
 u=\langle Rb_3\rangle,\quad v=\langle a_3T\rangle.
 \end{gathered}
\]
Equation~\eqref{eq:F-reflections} becomes
\[
 \frac F2=cx-dy+2cdz+\bar c u+\bar d v.
\]
In each of the pairs $(P,Rb_3)$ and $(Q,a_3T)$, the two operators are
anticommuting $\pm1$-valued observables.  Hence their expectation
values lie in the unit disk:
\[
 x^2+u^2\leq1,
 \qquad
 y^2+v^2\leq1.
\]
Moreover, $P$ and $Q$ commute because they belong to different
parties.  Their joint $\pm1$ distribution gives
\[
 1-z=2\Pr(P\ne Q)\geq|x-y|.
\]
Consequently $F/2\leq G$, where
\[
 G=cx-dy+2cd(1-|x-y|)
   +\bar c\sqrt{1-x^2}+\bar d\sqrt{1-y^2}.
\]

Whenever the maximum occurs at $x=y=t$,
\[
 G\leq2cd+\sqrt{(c-d)^2+(\bar c+\bar d)^2}\leq\frac52.
\]
Indeed, write $c=\cos C$, $d=\cos D$, and
$w=\sin((C+D)/2)$.  The middle expression is at most
\[
 1+\cos(C+D)+2\sin\frac{C+D}{2}
 =2+2w-2w^2\leq\frac52.
\]

It remains only to see when the maximum can lie away from $x=y$.
For $x\leq y$, the terms depending on $x$ and $y$ are
\[
 c(1+2d)x+\bar c\sqrt{1-x^2},
 \qquad
 -d(1+2c)y+\bar d\sqrt{1-y^2}.
\]
Their unconstrained maximizers are respectively nonnegative and
nonpositive, so by concavity the maximum under $x\leq y$ lies on
$x=y$.  For
$x\geq y$, the corresponding linear coefficients are
$c(1-2d)$ and $d(2c-1)$.  If $c,d>1/2$, their unconstrained
maximizers violate the constraint $x\geq y$, so by concavity the maximum
again lies on $x=y$.
Otherwise, discard the constraint and maximize $x$ and $y$
independently to obtain
\[
 \begin{aligned}
 G\leq{}&2cd+\sqrt{1-4c^2d(1-d)}
              +\sqrt{1-4d^2c(1-c)}\\
 \leq{}&2+2cd(1-c-d+2cd)\leq\frac52.
 \end{aligned}
\]
Here the second line uses $\sqrt{1-t}\leq1-t/2$.  For the last
inequality, assume by symmetry that $c\leq1/2$; then
$1-c-d+2cd\leq1-c$, so the term added to $2$ is at most
$2c(1-c)\leq1/2$.  Thus $F\leq5$.

It remains to treat the case where one of $a_1,a_2,b_1,b_2$
equals $\pm\1$.  Let
\[
\begin{aligned}
 \widehat F={}&a_1+a_2-b_1-b_2+(a_1+a_2)(b_1+b_2)\\
 &+(a_2-a_1)b_3+a_3(b_2-b_1)
\end{aligned}
\]
be the Bell operator, so that $F=\langle\widehat F\rangle$.  The
functional is invariant under
\[
 a_1\leftrightarrow a_2,\quad b_3\mapsto-b_3,
\]
and under exchanging the parties while sending
$a_i\mapsto-b_i$ and $b_i\mapsto-a_i$.  It therefore suffices to
consider $a_1=\pm\1$.

If $a_1=\1$, then
\[
 \widehat F=
 \underbrace{a_2(b_1+b_2)+a_3(b_2-b_1)}_{\mathrm{CHSH}}
 +\underbrace{\1+a_2+(a_2-\1)b_3}_{\leq2\1},
\]
and hence $F\leq2\sqrt2+2<5$.  If $a_1=-\1$, equivalently
$A_1=0$, the original effect form becomes
\[
 \begin{aligned}
 \widehat I_{3322}={}&B_1(A_2-A_3-\1)
   +B_2(A_2+A_3-2\1)\\
 &+A_2(B_3-\1)\leq0.
 \end{aligned}
\]
Each bracket is negative semidefinite and commutes with the positive
effect multiplying it.  Thus these remaining cases also satisfy
$F<5$, and hence $I_{3322}\leq1/4$.

Equality is attained by
\[
 |\Phi^+\rangle=\frac{|00\rangle+|11\rangle}{\sqrt2}
\]
with observables $b_i=\beta_i\cdot\vec\sigma$ and
$a_i=\alpha_i\cdot\vec\sigma$, whose Bloch vectors are
\[
\begin{aligned}
 \beta_{1,2}&=(\sqrt3/2,\mp1/2,0),&
 \beta_3&=(0,0,1),\\
 \alpha_{1,2}&=(\sqrt3/2,0,\mp1/2),&
 \alpha_3&=(0,-1,0).
\end{aligned}
\]
The correlation matrix of $|\Phi^+\rangle$ is
$\operatorname{diag}(1,-1,1)$, and direct substitution gives
$F=5$, hence $I_{3322}=1/4$.
\end{proof}

\section{Verification of the matching identities}
\label{app:matching-identities}

This appendix verifies the two identities used in the proof of
Lemma~\ref{lem:finite-sequence-averaging}.  Assume that every row and
column marginal of $\theta$ is positive, and define the coefficients by
Eq.~\eqref{eq:all-index-coefficients}.

Fix a position $i$ and a pair $(a,b)$, and sum over all sequences
with $(c_{i-1}^\gamma,c_i^\gamma)=(a,b)$.  The column-normalized factors
to the left collapse successively to one, as do the row-normalized factors
to the right.  Only the central factor $\theta_{ab}$ remains, which
proves Eq.~\eqref{eq:target-diagonal-total}.

For $1\leq i<N$, the formulas for the squared coefficients at positions
$i$ and $i+1$ of the same sequence contain the same matrix entries.
They differ only in the normalization at the shared spectral label
$c_i^\gamma$, and therefore
\begin{equation}
 R_{c_i^\gamma}(\lambda_i^\gamma)^2
 =C_{c_i^\gamma}(\lambda_{i+1}^\gamma)^2.
\label{eq:adjacent-ratio}
\end{equation}
For sequences with $c_i^\gamma=c$, taking square roots gives
\begin{equation}
 \lambda_i^\gamma\lambda_{i+1}^\gamma
 =\sqrt{\frac{C_c}{R_c}}\,(\lambda_{i+1}^\gamma)^2.
\end{equation}
Summing this identity and applying
Eq.~\eqref{eq:target-diagonal-total} at position $i+1$ yields
\begin{equation}
 \sum_{\gamma:\,c_i^\gamma=c}
 \lambda_i^\gamma\lambda_{i+1}^\gamma
 =\sqrt{\frac{C_c}{R_c}}\,R_c
 =\sqrt{C_cR_c},
\end{equation}
which is Eq.~\eqref{eq:target-neighbor-total}.  Finally, for a fixed
sequence $\gamma$, every matrix entry
$\theta_{c_{j-1}^\gamma c_j^\gamma}$ occurs in every
$(\lambda_i^\gamma)^2$.  Hence, for each fixed sequence $\gamma$, either
every coefficient $\lambda_i^\gamma$ is positive or every
$\lambda_i^\gamma$ vanishes, justifying the removal of the zero chains in
the main proof.

\section{Technical cases in the strictness argument}
\label{app:technical-cases}

This appendix proves the coarse interval in
Eq.~\eqref{eq:coarse-bounds} and treats the endpoint $c=\pm1$, where
the appending move in Sec.~\ref{sec:strictness} has no linear term.

\paragraph{Proof of Eq.~\eqref{eq:coarse-bounds}.}
For the lower bound, choose a length-29 chain.  Set
\begin{align*}
 c_1=\cdots=c_{12}&=\frac{15}{17},
 &c_{13}&=\frac{171}{221},
 &c_{14}&=\frac5{13},\\
 c_{29-i}&=-c_i &&(1\leq i\leq14),
\end{align*}
with $c_0=1$ and $c_{29}=-1$.  These labels have rational values
of $s$:
\[
 s\left(\frac{15}{17}\right)=\frac8{17},\qquad
 s\left(\frac{171}{221}\right)=\frac{140}{221},\qquad
 s\left(\frac5{13}\right)=\frac{12}{13}.
\]
Choose the unnormalized Schmidt coefficients
\begin{align*}
 (\lambda_1,\ldots,\lambda_{15})
 &=(2,3,\ldots,14,13,12),\\
 \lambda_{30-i}&=\lambda_i &&(1\leq i\leq14).
\end{align*}
Here $\sum_i\lambda_i^2=2510$.  Since all numbers are rational, exact
substitution in Eq.~\eqref{eq:pv-value} immediately gives
\[
 \mathcal P_{29}
 =\frac{15324577}{61295455}
 =\frac14+\frac{2853}{245181820}>\frac14.
\]

For the upper bound, use
\[
 s(c_i)\lambda_i\lambda_{i+1}
 \leq\frac{s(c_i)}2(\lambda_i^2+\lambda_{i+1}^2).
\]
The value in Eq.~\eqref{eq:pv-value} is then at most the largest of
\[
 d(a,b)+\frac{s(a)+s(b)}2,
 \qquad a,b\in[-1,1].
\]
Writing $a=\cos A$, $b=\cos B$, and using product-to-sum identities
gives the elementary bound
\[
 d(a,b)+\frac{s(a)+s(b)}2
 \leq\frac{\sqrt5-1}{4}<\frac13.
\]
For completeness, put $m=(A+B)/2$, $v=(A-B)/2$, and
$x=\sin m$.  The left-hand side becomes
\[
 -x^2-\sin^2v+x(\cos v-\sin v),
\]
whose maximum over $x$ is at most
\[
 -\frac14+\frac12\cos(2v)-\frac14\sin(2v)
 \leq\frac{\sqrt5-1}{4}.
\]
This proves Eq.~\eqref{eq:coarse-bounds}.

\paragraph{The endpoint $c_n=\pm1$.}
At such an endpoint the nearest-neighbor coefficient $s(c_n)$
vanishes, so
the simple appending move has no linear term.  Normalize
the finite or one-sided auxiliary sequence, write $V$ for its value and
$n$ for its last position, and write $c_n=\sigma=\pm1$,
$a=c_{n-1}$.  Set
\[
 c_n(\delta)=\sigma(1-\delta),
 \qquad
 c_{n+1}=\sigma,
\]
and append the auxiliary coefficient
$\alpha\sqrt\delta$, where
$\alpha=u_n/(\sqrt2V)$.  Direct expansion gives
\begin{align*}
 d(a,c_n(\delta))-d(a,\sigma)&=\sigma(1/2-a)\delta,\\
 s(c_n(\delta))&=\sqrt{2\delta-\delta^2},\\
 d(c_n(\delta),\sigma)&=-\left(1+\frac\sigma2\right)\delta,
\end{align*}
and hence the normalized value becomes
\[
 V+\delta u_n^2
 \left[\sigma(1/2-a)+\frac1{2V}\right]+O(\delta^2).
\]
If $\sigma=1$, the first term in brackets is at least $-1/2$; if
$\sigma=-1$, it is at least $-3/2$.  Since $V<1/3$, the bracket
is positive in either case.  The perturbation changes only finitely many
terms and therefore applies unchanged to a one-sided square-summable
auxiliary sequence.

\section{Lean formalization}
\label{app:lean}

The two principal Lean theorem declarations are \nolinkurl{variational} and
\nolinkurl{finiteDimensional_nonattainment}; they prove
$\Istar=\betaPV$ and finite-dimensional nonattainment,
respectively.  The formalization treats finite-dimensional pure-state
strategies with three binary projective measurements per party and Born-rule
values.  The paper's reduction from mixed states and POVMs is not formalized.

The formal upper bound uses a different derivation: it keeps the two
spectral decompositions separate, applies Cauchy--Schwarz on each side,
and then symmetrizes the joint weights.  The source, build instructions,
and dependency graph are available at
\url{https://github.com/jefpauwels/i3322-lean}; the revision audited for this
paper is commit \texttt{f9825cd}.  The source contains no
\texttt{sorry}, \texttt{admit}, or custom axiom; the audit of both
theorems reports only \texttt{propext}, \texttt{Classical.choice}, and
\texttt{Quot.sound}.  The commands \texttt{lake build I3322} and
\texttt{lake env lean Audit.lean} reproduce the build and audit.
 
\end{document}